\newcommand{\RR}{\mathbbm{R}}
\DeclareMathOperator{\tr}{Tr}
\DeclareMathOperator{\real}{Re}
\DeclareMathOperator{\imag}{Im}
\newcommand{\id}{\mathbbm{1}}
\newtheorem{theorem}{Theorem}
\newtheorem{lemma}{\emph{Lemma}}
\newtheorem{proposition}{Proposition}
\newcommand{\bra}[1]{\langle #1|}
\newcommand{\ket}[1]{|#1 \rangle}
\newcommand{\braket}[2]{ \langle #1| #2 \rangle}
\newcommand{\ketbra}[2]{|#1 \rangle\! \langle#2|}
\newcommand{\norm}[1]{ \lVert #1  \rVert}
\newcommand{\I}{\mathscr{I}}
\newcommand{\RS}{\mathscr{R}}
\newcommand{\MI}{\hat{+}}
\newcommand{\MIi}{\hat{-}}
\begin{document}

\title{Resource theory of imaginarity: quantification and state conversion}
\author{Kang-Da Wu}
\affiliation{CAS Key Laboratory of Quantum Information, University of Science and Technology of China, \\ Hefei 230026, People's Republic of China}
\affiliation{CAS Center For Excellence in Quantum Information and Quantum Physics, University of Science and Technology of China, Hefei, 230026, People's Republic of China}

\author{Tulja Varun Kondra}
\affiliation{Centre for Quantum Optical Technologies, Centre of New Technologies, University of Warsaw, Banacha 2c, 02-097 Warsaw, Poland}

\author{Swapan Rana}
\affiliation{Centre for Quantum Optical Technologies, Centre of New Technologies,
University of Warsaw, Banacha 2c, 02-097 Warsaw, Poland}
\affiliation{S. N. Bose National Centre for Basic Sciences, JD Block, Sector III,
Kolkata 700106, India}
\affiliation{Physics \& Applied Mathematics Unit, Indian Statistical Institute, 203 B T Road,
Kolkata 700108, India}

\author{Carlo Maria Scandolo}
\affiliation{Department of Mathematics \& Statistics, University of Calgary, Calgary, AB, T2N 1N4, Canada}
\affiliation{Institute for Quantum Science and Technology, University of Calgary, Calgary, AB,  T2N 1N4, Canada}

\author{Guo-Yong Xiang}

\email{gyxiang@ustc.edu.cn}
\affiliation{CAS Key Laboratory of Quantum Information, University of Science and Technology of China, \\ Hefei 230026, People's Republic of China}
\affiliation{CAS Center For Excellence in Quantum Information and Quantum Physics, University of Science and Technology of China, Hefei, 230026, People's Republic of China}

\author{Chuan-Feng Li}

\affiliation{CAS Key Laboratory of Quantum Information, University of Science and Technology of China, \\ Hefei 230026, People's Republic of China}
\affiliation{CAS Center For Excellence in Quantum Information and Quantum Physics, University of Science and Technology of China, Hefei, 230026, People's Republic of China}

\author{Guang-Can Guo}

\affiliation{CAS Key Laboratory of Quantum Information, University of Science and Technology of China, \\ Hefei 230026, People's Republic of China}
\affiliation{CAS Center For Excellence in Quantum Information and Quantum Physics, University of Science and Technology of China, Hefei, 230026, People's Republic of China}

\author{Alexander Streltsov}
\email{a.streltsov@cent.uw.edu.pl}
\affiliation{Centre for Quantum Optical Technologies, Centre of New Technologies,
University of Warsaw, Banacha 2c, 02-097 Warsaw, Poland}

\begin{abstract} Complex numbers are widely used in both classical and quantum physics, and are indispensable components for describing quantum systems and their dynamical behavior. Recently, the resource theory of imaginarity has been introduced, allowing for a systematic study of complex numbers in quantum mechanics and quantum information theory. In this work we develop theoretical methods for the resource theory of imaginarity, motivated by recent progress within theories of entanglement and coherence. We investigate imaginarity quantification, focusing on the geometric imaginarity and the robustness of imaginarity, and apply these tools to the state conversion problem in imaginarity theory. Moreover, we analyze the complexity of real and general operations in optical experiments, focusing on the number of unfixed wave plates for their implementation. We also discuss the role of imaginarity for local state discrimination, proving that any pair of real orthogonal pure states can be discriminated via local real operations and classical communication. Our study reveals the significance of complex numbers in quantum physics, and proves that imaginarity is a resource in optical experiments.

\end{abstract}

\maketitle

\section{Introduction}

Quantum resource theories provide a unified approach for studying properties of quantum systems and their applications for quantum technology~\cite{Quantum-resource-1,Quantum-resource-2,ChitambarRevModPhys.91.025001}. The basis of any quantum resource theory is the definition of free states: these are states which are easy to prepare, compared to the effort for creating general quantum states. The concrete set of free states depends on the specific problem under study. As an example, in the resource theory of quantum entanglement~\cite{PhysRevLett.78.2275,Plenioquant-ph/0504163,HorodeckiRevModPhys.81.865} two remote parties can easily perform quantum operations in their local labs and can exchange classical information at no additional cost. The set of states which can be easily established in this setup is the set of separable states~\cite{WernerPhysRevA.40.4277}. Another important element of a resource theory is the definition of free operations, corresponding to transformations of a quantum system which are easy to implement. As with the free states, the concrete definition depends on the problem under study. In the theory of entanglement, the free operations are known as local operations and classical communication (LOCC)~\cite{BennettPhysRevA.54.3824}. 

Quantum entanglement is the main ingredient in many quantum technological tasks, such as quantum teleportation~\cite{BennettPhysRevLett.70.1895} and quantum key distribution~\cite{EkertPhysRevLett.67.661}. However, it has become clear in recent years that entanglement is not the sole source responsible for quantum advantages. This has led to the development of other quantum resource theories, in particular the resource theories of coherence~\cite{BaumgratzPhysRevLett.113.140401,StreltsovRevModPhys.89.041003}, quantum thermodynamics~\citep{delRio,Lostaglio-thermo}, purity~\cite{HorodeckiPhysRevA.67.062104,Gour20151,Streltsov_2018}, and asymmetry~\cite{Gour_2008,GourPhysRevA.80.012307,Marvian2014}. 

In this article, which is also a companion paper of~\cite{PRLversion},  we investigate the resource theory of imaginarity~\cite{Hickey+Gour.JPA.2018}, capturing the effort to create and manipulate quantum states with complex coefficients. The free states of this theory are \emph{real states}, i.e., quantum states with a real density matrix $\braket{m}{\rho|n}\in \RR$ for a fixed reference basis $\{\ket{m}\}$. The free operations are \emph{real operations}, i.e., quantum operations $\Lambda[\rho]=K_j \rho K_j^\dagger$ with real Kraus operators: $\braket{m}{K_j|n} \in \RR$. 

One of the main questions of any resource theory is whether two given states $\rho$ and $\sigma$ can be converted into each other via free operations. This is the \emph{state conversion problem}, and to solve it one needs to determine all quantum states $\sigma$ which can be obtained from a given state $\rho$. If the conversion $\rho \rightarrow \sigma$ is not possible deterministically, it might still be possible to achieve the transformation stochastically. The goal is then to determine the maximal probability of conversion $P(\rho \rightarrow \sigma)$. For the resource theory of imaginarity a full solution for probabilistic state conversion for all pure states has been recently announced in~\cite{PRLversion}, along with a complete solution for deterministic conversion for all single-qubit states. In this article we present a detailed discussion of these results, along with the technical proofs. Our methods make use of general properties of resource quantifiers and their connection to the state conversion problem. We present methods for imaginarity quantification, focusing on the geometric imaginarity and the robustness of imaginarity, and show that both measures have an operational interpretation via the state conversion problem. We also consider approximate imaginarity distillation, where the goal is to maximize the fidelity with the maximally imaginary states via real operations. Also for this problem the solution for all mixed states was announced in~\cite{PRLversion}, and we present a detailed proof in this article.

The sets of free states and free operations completely define a resource theory. However, for a resource theory to have an operational meaning it is also desirable that there exists an experimentally relevant setup where the free states are easy to prepare and the free operations are easy to implement, compared to general quantum states and operations. As we show in this work, the resource theory of imaginarity fulfills also this requirement when focusing on experiments with linear optics, if we restrict our elements to only wave plates and beam displacing devices. In standard path-encoded method, a pure quantum state $\ket{\psi}$ of dimension $d$ requires $2d-2$ unfixed wave plates to be created. In contrast, if all coefficients are real, we need only $d-1$ unfixed wave plates, due to the smaller number of parameters of the real vector space. Thus, restricting ourselves to quantum systems with real coefficients allows to reduce the number of unfixed wave plates, suggesting that real quantum states are easier to create, compared to general quantum states with complex elements. Starting from this observation, we analyze the effort for performing a general quantum operation in optical experiments, again focusing on the number of unfixed wave plates. As announced in the companion paper~\cite{PRLversion}, restricting ourselves to real operations allows to reduce the number of unfixed wave plates by $1/2$ in the limit of large dimension of the system under study, and a complete proof of this statement is presented in this work.

This article is structured as follows. In Section~\ref{sec:ResourceTheories} we discuss properties of general quantum resource theories. In Section~\ref{sec:Imaginarity} we define the resource theory of imaginarity and discuss its main features. In Section~\ref{sec:QuantifyingImaginarity} we present methods for imaginarity quantification. In Section~\ref{sec:StateTransformations} we consider state transformations via real operations.  Operational meaning of imaginarity in optical experiments is discussed in Section~\ref{sec:OpticalExperiments}. In Section~\ref{sec:StateDiscrimination} we discuss applications of imaginarity for local discrimination of quantum states. Conclusion is presented in Section~\ref{sec:Conclusion}.

\section{Quantum resource theories \\ and their features \label{sec:ResourceTheories}}
One of the main questions in any quantum resource theory is whether for two given quantum states $\rho$ and $\sigma$ there exists a free operation $\Lambda_f$ transforming $\rho$ into $\sigma$:
\begin{equation}
\sigma = \Lambda_f [\rho]. \label{eq:StateTransformation}
\end{equation}
The existence of such a transformation immediately implies that $\rho$ is more resourceful than $\sigma$, and in particular 
\begin{equation}
R(\rho)\geq R(\sigma) \label{eq:Monotonicity}
\end{equation}
 for any resource measure $R$.

If $\rho$ cannot be converted into $\sigma$ via a free operation, e.g. if $R(\rho)<R(\sigma)$, it might still be possible to achieve the conversion probabilistically, if the corresponding resource theory allows for stochastic free operations, with free Kraus operators $\{K_j\}$ such that $\sum_i K_j^\dagger K_j \leq \id$. It is further reasonable to assume that any incomplete set of free Kraus operators $\{K_j\}$ can be completed with free Kraus operators $\{L_i\}$ such that 
\begin{equation}
    \sum_i L_i^\dagger L_i + \sum_j K_j^\dagger K_j = \id. \label{eq:CompletenessStochastic}
\end{equation}
The maximal probability for converting $\rho$  into $\sigma$  is then defined as
\begin{equation}
P\left(\rho\rightarrow \sigma\right)=\max_{\{K_j\}} \left\{\sum_j p_j: \sigma=\frac{\sum_j K_j \rho K_j^\dagger}{\sum_j p_j}\right\} \label{eq:TransformationP}
\end{equation}
with probabilities $p_j=\tr[K_j\rho K_j^\dagger]$, and the maximum is taken over all (possibly incomplete) sets of free Kraus operators~$\{K_j\}$. The existence of a deterministic free operation between $\rho$ and  $\sigma$ as in Eq.~\eqref{eq:StateTransformation} is then equivalent to $P\left(\rho \rightarrow \sigma\right)=1$. 

If two states $\rho$ and $\sigma$ do not allow for deterministic neither stochastic transformations [i.e., $P\left(\rho \rightarrow \sigma\right)=0$], there remains the possibility to perform the transformation approximately. The figure of merit in this case is the maximal transformation fidelity
\begin{equation}
F(\rho\rightarrow\sigma)=\max_{\Lambda_f}\left\{ F(\Lambda_f[\rho],\sigma) \right\},
\end{equation}
with fidelity 
\begin{equation}
    F(\rho,\sigma)=\left(\tr\sqrt{\sqrt{\rho} \sigma \sqrt{\rho}}\right)^2, \label{eq:Fidelity}
\end{equation}
and the maximum is taken over all free operations $\Lambda_f$.

Any resource measure $R$ is monotonic under free operations, see Eq.~\eqref{eq:Monotonicity}. For resource theories which allow for stochastic conversion, one typically requires a stronger constraint on the resource measure, to be monotonic on average under free operations:
\begin{equation}
R(\rho)\geq \sum_j q_j R(\sigma_j). \label{eq:StrongMonotonicityR}
\end{equation}
Here, the states $\sigma_j$ arise from $\rho$ by applying a free operation: $\sigma_j = K_j \rho K_j^\dagger/q_j$ with free Kraus operators $K_j$, and $q_j$ is the corresponding probability: $q_j = \tr[K_j \rho K_j^\dagger]$. Quantifiers satisfying Eq.~\eqref{eq:StrongMonotonicityR} are also called \emph{strong resource monotones}. If $R$ is additionally convex, i.e., 
\begin{equation}
    R\left(\sum_j p_j \rho_j\right) \leq \sum_j p_j R(\rho_j),
\end{equation}
then strong monotonicity~\eqref{eq:StrongMonotonicityR} implies monotonicity~\eqref{eq:Monotonicity}.

A powerful upper bound on the conversion probability~\eqref{eq:TransformationP} can be obtained from any resource quantifier which is convex and strongly monotonic under free operations. For any such resource quantifier $R$, it holds:
\begin{equation}
    P\left(\rho \rightarrow \sigma\right) \leq \min\left\{\frac{R(\rho)}{R(\sigma)},1\right\}. \label{eq:Pbound}
\end{equation}
A proof of Eq.~(\ref{eq:Pbound}) was given in~\cite{Wu2020} for the resource theory of coherence, but the methods presented there can be applied for any resource theory. For this, let $\{K_j\}$ be a (possibly incomplete) set of free Kraus operators which transform $\rho$ to $\sigma$:
\begin{equation}
    \sigma = \frac{\sum_j K_j \rho K_j^\dagger}{\tr\left[\sum_j K_j \rho K_j^\dagger\right]}.
\end{equation}
We further assume that there exist free Kraus operators $\{L_i\}$ which complete the set $\{K_j\}$, see Eq.~(\ref{eq:CompletenessStochastic}). We now define the probabilities 
\begin{subequations}
\begin{align}
    p_j &= \tr[K_j \rho K_j^\dagger],\\
    q_i &= \tr[L_i \rho L_i^\dagger],
\end{align}
\end{subequations}
and post-measurement states
\begin{subequations}
\begin{align}
    \sigma_j &= \frac{K_j \rho K_j^\dagger}{p_j},\\
    \tau_i &= \frac{L_i \rho L_i^\dagger}{q_i}.
\end{align}
\end{subequations}
In general it holds that $P(\rho \rightarrow \sigma) \geq \sum_j p_j$, and there exists a set of free Kraus operators $\{K_j\}$ saturating this inequality. In the following, we assume that this is the case, i.e., $P(\rho \rightarrow \sigma) = \sum_j p_j$. Using convexity and strong monotonicity of $R$ it follows that
\begin{align}
    R(\rho) &\geq \sum_j p_j R(\sigma_j) + \sum_i q_i R(\tau_i) \geq \sum_j p_j R(\sigma_j) \nonumber \\
            &= P(\rho\rightarrow\sigma)\sum_j \frac{p_j}{P(\rho\rightarrow\sigma)}R(\sigma_j) \nonumber \\
            &\geq P(\rho\rightarrow\sigma) R(\sigma).
\end{align}
This completes the proof of Eq.~(\ref{eq:Pbound}).

An important resource quantifier is the robustness with respect to set of free states $\mathcal{F}$:
\begin{equation}
    R_\mathcal{F}(\rho) =\min_\tau\left\{s \geq 0:\frac{\rho+s\tau}{1+s}\in\mathcal{F}\right\}, \label{eq:Robustness}
\end{equation}
where the minimum is taken over all quantum states $\tau$ and all $s \geq 0$. The robustness measure has been first studied in the resource theory of entanglement~\cite{VidalPhysRevA.59.141,SteinerPhysRevA.67.054305,Plenioquant-ph/0504163} and more recently in the resource theory of coherence~\cite{NapoliPhysRevLett.116.150502,PianiPhysRevA.93.042107}. For any quantum resource theory, the robustness is closely related to the success probability in channel discrimination tasks~\cite{TakagiPhysRevLett.122.140402,Takagi+Regula.PRX.2019}. Here, a quantum channel $\Lambda_j$ is acting on a state $\rho$ with probability $p_j$. The goal of channel discrimination is to determine which channel has acted, by applying a measurement with POVM elements $\{M_j\}$ onto the final states $\Lambda_j(\rho)$. The success probability of the procedure is given as
\begin{equation}
    p_{\mathrm{succ}} (\rho,\{p_j,\Lambda_j\},\{M_j\}) = \sum_j p_j \tr[M_j\Lambda_j(\rho)].
\end{equation}
The connection between channel discrimination and the robustness measure is then given by~\cite{TakagiPhysRevLett.122.140402,Takagi+Regula.PRX.2019}
\begin{equation}
    \max_{\{p_j,\Lambda_j\},\{M_j\}} \frac{p_{\mathrm{succ}} (\rho,\{p_j,\Lambda_j\},\{M_j\})}{\max_{\sigma \in \mathcal{F}}p_{\mathrm{succ}} (\sigma,\{p_j,\Lambda_j\},\{M_j\})} = 1+ R_\mathcal{F}(\rho). \label{eq:RobustnessChannelDiscrimination}
\end{equation}
Eq.~\eqref{eq:RobustnessChannelDiscrimination} implies that for any resource state $\rho$ (i.e.\ a quantum state which is not element of $\mathcal{F}$) there exist a set of channels $\{\Lambda_j\}$ and a probability distribution $\{p_j\}$ such that the optimal guessing probability is strictly larger than for any $\sigma \in \mathcal F$.

\section{Resource theory of imaginarity \label{sec:Imaginarity}}

In the resource theory of imaginarity~\cite{Hickey+Gour.JPA.2018} the free states are called \emph{real states}, corresponding to the set of quantum states with a real density matrix: 
\begin{equation}
    \RS =\left\{\rho: \braket{m}{\rho|n} \in \RR\right\}.
\end{equation}
Similar to the resource theory of coherence~\cite{StreltsovRevModPhys.89.041003} this definition depends on the reference basis $\{\ket{m}\}$, and a state which is real with respect to one basis is not necessarily real with respect to another basis. Free operations in imaginarity theory are called \emph{real operations}, corresponding to quantum operations $\Lambda[\rho] = \sum_j K_j\rho K_j^\dagger$ with real Kraus operators~\cite{Hickey+Gour.JPA.2018}:
\begin{equation}
    \braket{m}{K_j|n} \in \RR. \label{eq:RealKraus}
\end{equation}
This definition implies that a real operation cannot create imaginarity, even if interpreted as a quantum measurement. It is clear from Eq.~(\ref{eq:RealKraus}) that an incomplete set of real Kraus operators $\{K_j\}$ can always be completed with real Kraus operators $\{L_i\}$ such that Eq.~(\ref{eq:CompletenessStochastic}) holds. In particular, we can define 
\begin{equation}
    L_0=\sqrt{\id - \sum_j K_j^\dagger K_j}.
\end{equation}
Since $K_j^\dagger K_j$ are real symmetric matrices, also $L_0$ is real, and moreover $L_0^\dagger L_0 + \sum_j K_j^\dagger K_j = \id$. An important state in the resource theory of imaginarity is the maximally imaginary state
\begin{equation}
    \ket{\MI} = \frac{1}{\sqrt{2}}(\ket{0} + i\ket{1}).
\end{equation}
The state $\ket{\MI}$ can be converted into any quantum state of arbitrary dimension via real operations~\cite{Hickey+Gour.JPA.2018}. The same holds true for the state $\ket{\MIi} = (\ket{0} - i\ket{1})/\sqrt{2}$.

We will now show that in the resource theory of imaginarity any pure state admits a simple generic form. Recalling that the definition of imaginarity is basis dependent, we define complex conjugation of a state $\ket{\psi}$ as follows:
\begin{equation}
    \ket{\psi^*} = \sum_j c_j^* \ket{j},
\end{equation}
where $\{\ket{j}\}$ is the reference basis, and $c_j = \braket{j}{\psi}$. The states $\ket{\psi}$ and $\ket{\psi^*}$ can also be expressed as
\begin{subequations} \label{eq:PureGamma}
\begin{align}
    \ket{\psi}   &= a\ket{\gamma_1} + ib\ket{\gamma_2},\\
    \ket{\psi^*} &= a\ket{\gamma_1} - ib\ket{\gamma_2},
\end{align}
\end{subequations}
where $a$ and $b$ are real numbers with $a^2+b^2=1$, and $\ket{\gamma_i}$ are real states. Equipped with these tools, we are now ready to prove the following proposition.
\begin{proposition} \label{prop:PureGeneric}
For any pure state $\ket{\psi}$ there exists a real orthogonal matrix $O$ such that
\begin{equation}
    O\ket{\psi} = \sqrt{\frac{1+|\braket{\psi^*}{\psi}|}{2}}\ket{0} + i \sqrt{\frac{1-|\braket{\psi^*}{\psi}|}{2}}\ket{1}. \label{eq:PureGeneric}
\end{equation}
\end{proposition}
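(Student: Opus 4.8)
The plan is to reduce any pure state to the canonical two-dimensional form by first fixing its global phase and then applying a single real rotation. The starting point is to expand $\ket{\psi}=\sum_j c_j\ket{j}$ in the reference basis and split it into its real and imaginary parts, $\ket{\psi}=\ket{u}+i\ket{v}$ with $\ket{u}=\sum_j \real(c_j)\ket{j}$ and $\ket{v}=\sum_j \imag(c_j)\ket{j}$ real vectors. The key algebraic observation I would use is that $\braket{\psi^*}{\psi}=\sum_j c_j^2=\bigl(\norm{u}^2-\norm{v}^2\bigr)+2i\braket{u}{v}$, and that under a global phase $\ket{\psi}\mapsto e^{i\theta}\ket{\psi}$ this quantity simply picks up a factor $e^{2i\theta}$. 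I would then choose $\theta=-\tfrac{1}{2}\arg\sum_j c_j^2$ so that $\braket{\psi^*}{\psi}$ becomes real and nonnegative; equivalently, this phase choice enforces $\braket{u}{v}=0$ together with $\norm{u}\geq\norm{v}$.

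With the phase fixed, the real vectors $\ket{u}$ and $\ket{v}$ are orthogonal, which is precisely the orthonormal refinement of the decomposition in Eq.~\eqref{eq:PureGamma}. Setting $a=\norm{u}$, $b=\norm{v}$, $\ket{\gamma_1}=\ket{u}/a$, and $\ket{\gamma_2}=\ket{v}/b$ gives $\ket{\psi}=a\ket{\gamma_1}+ib\ket{\gamma_2}$ with $a^2+b^2=\braket{\psi}{\psi}=1$. Reading off $\braket{\psi^*}{\psi}=a^2-b^2=|\braket{\psi^*}{\psi}|$ and combining it with $a^2+b^2=1$ then yields $a=\sqrt{(1+|\braket{\psi^*}{\psi}|)/2}$ and $b=\sqrt{(1-|\braket{\psi^*}{\psi}|)/2}$, which are exactly the target coefficients. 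Finally, since $\{\ket{\gamma_1},\ket{\gamma_2}\}$ is an orthonormal pair of real vectors, I would extend it to a real orthonormal basis and take $O$ to be the real orthogonal matrix with $O\ket{\gamma_1}=\ket{0}$ and $O\ket{\gamma_2}=\ket{1}$, so that $O\ket{\psi}=a\ket{0}+ib\ket{1}$ is the claimed right-hand side. The degenerate cases are covered directly: $b=0$ gives a real state mapped to $\ket{0}$, $a=b$ gives the maximally imaginary target, and when $\sum_j c_j^2=0$ the orthogonality $\braket{u}{v}=0$ holds automatically so the phase choice is free.

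The main obstacle, and indeed the only step with genuine content, is the phase fixing. A real orthogonal $O$ preserves $\braket{\psi^*}{\psi}$ exactly and can therefore never alter its complex phase, whereas the target state has $\braket{\chi^*}{\chi}=|\braket{\psi^*}{\psi}|\geq 0$ real; the unwanted phase must thus be absorbed beforehand, which is legitimate because pure states are defined only up to a ray and because $|\braket{\psi^*}{\psi}|$ is phase invariant. Establishing orthonormality of the real and imaginary parts is what makes a \emph{single} real rotation possible, since a real orthogonal map necessarily sends orthonormal sets to orthonormal sets, so this is where the real work lies. Once the orthonormal decomposition is in hand, solving the $2\times 2$ system for $a,b$ and constructing $O$ is routine.
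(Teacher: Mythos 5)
Your proof is correct, but it takes a genuinely different route from the paper's. Both arguments start from the same decomposition into real and imaginary parts (the paper's $\ket{\psi}=a\ket{\gamma_1}+ib\ket{\gamma_2}$ with $a\ket{\gamma_1}=(\ket{\psi}+\ket{\psi^*})/2$ and $b\ket{\gamma_2}=(\ket{\psi}-\ket{\psi^*})/(2i)$ is exactly your $\ket{u}+i\ket{v}$ after normalization), but they diverge immediately after. The paper keeps $\ket{\gamma_1},\ket{\gamma_2}$ non-orthogonal, applies one real rotation to push the state into $\mathrm{span}\{\ket{0},\ket{1}\}$, and then works in the Bloch-sphere picture of the resulting effective qubit: a second real orthogonal transformation moves the Bloch vector into the positive $y$-$z$ plane, and the identities expressing $a^2$, $b^2$ and $ab\braket{\gamma_1}{\gamma_2}$ through $\braket{\psi^*}{\psi}$ show that the final Bloch vector is $(0,\sqrt{1-|\braket{\psi^*}{\psi}|^2},|\braket{\psi^*}{\psi}|)$, which is the Bloch vector of the target state. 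You instead fix the global phase first, choosing it so that $\braket{\psi^*}{\psi}=\sum_j c_j^2$ becomes real and nonnegative; this forces $\braket{u}{v}=0$ with $\norm{u}\geq\norm{v}$, after which a \emph{single} real rotation and two lines of algebra finish the proof, with no Bloch-sphere computation and no reduction to an effective qubit. Your route is shorter and dimension-agnostic, and it surfaces a point the paper leaves implicit: since any real orthogonal $O$ preserves $\braket{\psi^*}{\psi}$ \emph{including its phase}, the proposition as stated can only hold up to a global phase whenever $\braket{\psi^*}{\psi}\notin[0,\infty)$ --- the paper quietly absorbs this when it passes from the (phase-blind) Bloch vector back to a state vector, whereas you absorb it explicitly at the outset and justify why that is legitimate. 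What the paper's approach buys in exchange is that it never has to choose a phase at all, working entirely with states rather than vectors, and it sets up the Bloch-vector machinery that is reused later (e.g., in the qubit state-conversion results of Section~\ref{sec:StateTransformations}). Your handling of the degenerate cases ($b=0$, $a=b$, and $\sum_j c_j^2=0$, where the orthogonality holds for free) is complete and correct.
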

\begin{proof}
In the first step, note that for any two real states $\ket{\gamma_1}$ and $\ket{\gamma_2}$ there exists a real orthogonal matrix $O$ such that
\begin{align}
    O\ket{\gamma_1} &= \ket{0}, \\
    O\ket{\gamma_2} &= \cos \theta \ket{0} + \sin \theta \ket{1},
\end{align}
where $\cos \theta = \braket{\gamma_1}{\gamma_2}$. Applying $O$ to the state $\ket{\psi}$ gives us
\begin{equation}
    O\ket{\psi} = (a+ib\cos \theta)\ket{0} + ib\sin\theta \ket{1}.
\end{equation}
Since the state $O\ket{\psi}$ is effectively a single-qubit state, we can associate a Bloch vector $\boldsymbol{r}$ with it, with coordinates
\begin{align}
    r_x = b^2 \sin (2\theta), \nonumber \\
    r_y = 2ab \sin (\theta), \\
    r_z = a^2 + b^2 \cos (2\theta). \nonumber
\end{align}

Let now $O'$ be a real orthogonal transformation, such that the Bloch vector $\boldsymbol{s}$ of the state $O'O\ket{\psi}$ is in the positive $y$-$z$ plane. Since $|\boldsymbol{s}|=1$, we can give the coordinates of $\boldsymbol{s}$ as follows:
\begin{align}
    s_x &= 0,\,\,\,\,\,s_y=|r_y|, \label{eq:sBloch} \\
    s_z &= \sqrt{1-r_y^2} = \sqrt{1 - 4a^2b^2 + 4a^2b^2 \cos^2 \theta}. \nonumber
\end{align}

From Eqs.~(\ref{eq:PureGamma}) we further obtain
\begin{subequations} \label{eq:ab}
\begin{align}
    a\ket{\gamma_1} = \frac{\ket{\psi} + \ket{\psi^*}}{2}, \\
    b\ket{\gamma_2} = \frac{\ket{\psi} - \ket{\psi^*}}{2i}.
\end{align}
\end{subequations}
These results allow us to express $a^2$ and $b^2$ as 
\begin{align}
    a^2 &= \left| \frac{\ket{\psi} + \ket{\psi^*}}{2} \right|^2 = \frac{1}{4}(2 + \braket{\psi^*}{\psi} + \braket{\psi}{\psi^*}), \\
    b^2 &= \left| \frac{\ket{\psi} - \ket{\psi^*}}{2i} \right|^2 = \frac{1}{4}(2 - \braket{\psi^*}{\psi} - \braket{\psi}{\psi^*}).
\end{align}
Recalling that $\cos\theta = \braket{\gamma_1}{\gamma_2}$ we arrive at
\begin{align}
    ab\cos \theta = ab\braket{\gamma_1}{\gamma_2} = \frac{1}{4i} (\braket{\psi^*}{\psi} - \braket{\psi}{\psi^*}).
\end{align}
Using these results, we can simplify Eqs.~(\ref{eq:sBloch}) as follows:
\begin{equation}
    s_x = 0, \,\,\,\,\, s_y =\sqrt{1 - |\braket{\psi^*}{\psi}|^2}, \,\,\,\,\, s_z = |\braket{\psi^*}{\psi}|.
\end{equation}
The pure state corresponding to the Bloch vector $\boldsymbol{s}$ is given by Eq.~(\ref{eq:PureGeneric}).
\end{proof}

As we will see in the next section, the generic form given in Proposition~\ref{prop:PureGeneric} is very relevant when it comes to imaginarity quantification for pure states. We will now prove that for any real state $\rho$ the fidelity with the maximally imaginary state $\ket{\MI}$ is bounded by $1/2$.

\begin{proposition} \label{prop:RSFidelity}
For any real state $\rho \in \RS$ it holds that 
\begin{equation}
    \braket{\MI}{\rho|\MI} \leq \frac{1}{2}, \label{eq:RSFidelity}
\end{equation}
with equality if $\rho$ is a single-qubit state.
\end{proposition}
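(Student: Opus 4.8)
The plan is to evaluate $\braket{\MI}{\rho|\MI}$ directly and let the reality of $\rho$ do the work. Since $\ket{\MI} = (\ket{0}+i\ket{1})/\sqrt{2}$ is supported entirely on $\vspan\{\ket{0},\ket{1}\}$, the overlap sees only the upper-left $2\times2$ block of $\rho$, and expanding gives
\[
    \braket{\MI}{\rho|\MI} = \frac{1}{2}\Big(\braket{0}{\rho|0} + \braket{1}{\rho|1} + i\braket{0}{\rho|1} - i\braket{1}{\rho|0}\Big).
\]
The key step is to notice that the two imaginary terms cancel: Hermiticity gives $\braket{0}{\rho|1} = \braket{1}{\rho|0}^*$, and $\rho \in \RS$ forces all matrix elements to be real, so $\braket{0}{\rho|1} = \braket{1}{\rho|0}$ and their difference drops out. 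This is where the hypothesis $\rho \in \RS$ is essential; for a generic state these off-diagonal terms would encode precisely the imaginarity that $\ket{\MI}$ is meant to detect.

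After the cancellation one is left with $\braket{\MI}{\rho|\MI} = \tfrac{1}{2}\tr[P\rho]$, where $P = \ketbra{0}{0}+\ketbra{1}{1}$ projects onto the qubit subspace. Because $P \leq \id$ and $\rho$ is a normalized density operator, $\tr[P\rho]\leq\tr\rho = 1$, which yields the claimed bound $\braket{\MI}{\rho|\MI}\leq 1/2$. For the equality statement, a single-qubit $\rho$ is supported entirely on $\vspan\{\ket{0},\ket{1}\}$, so $\tr[P\rho]=1$ and the bound is saturated.

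I do not expect a genuine obstacle here: the entire content is the observation that $\braket{\MI}{\rho|\MI}$ probes only the qubit block of $\rho$ and that its imaginarity-sensitive off-diagonal part vanishes for real $\rho$. If one prefers an argument free of explicit matrix entries, the same conclusion follows from the symmetry $\braket{\MI}{\rho|\MI} = \braket{\MIi}{\rho|\MIi}$, which holds because $\rho = \rho^*$ and the overlap is real, combined with the identity $\ketbra{\MI}{\MI}+\ketbra{\MIi}{\MIi}=P$; summing the two equal overlaps immediately gives $2\braket{\MI}{\rho|\MI}=\tr[P\rho]\leq 1$.
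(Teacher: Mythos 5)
Your proof is correct and follows essentially the same route as the paper's: expand $\braket{\MI}{\rho|\MI}$, cancel the imaginary off-diagonal contribution using $\braket{0}{\rho|1}=\braket{1}{\rho|0}$ for real $\rho$, and bound the remaining diagonal sum $\braket{0}{\rho|0}+\braket{1}{\rho|1}$ by $\tr\rho=1$. Your write-up is merely more explicit than the paper about the last step (phrasing it as $\tr[P\rho]\leq 1$ with $P$ the projector onto $\vspan\{\ket{0},\ket{1}\}$) and about the equality case, and the conjugation-symmetry argument you sketch at the end is a valid, basis-light variant of the same computation.
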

\begin{proof}
For any real state we can write $\braket{\MI}{\rho|\MI}$ as follows:
\begin{align}
    \braket{\MI}{\rho|\MI} 
    &= \frac{(\bra{0} - i\bra{1})\rho (\ket{0} + i\ket{1})}{2}  \\
    &= \frac{\braket{0}{\rho|0} + \braket{1}{\rho|1}}{2}, \nonumber
\end{align}
where we used equality $\braket{0}{\rho|1} - \braket{1}{\rho|0} = 0$ which is true for any $\rho \in \RS$. This result directly implies Eq.~(\ref{eq:RSFidelity}), and it is clear that equality holds true for real qubit states.
\end{proof}

In the following, we will also make use of the fact that any $d$-dimensional quantum state $\rho$ can be decomposed as~\cite{Hickey+Gour.JPA.2018}
\begin{equation}
    \rho=\real (\rho) + i  \imag (\rho),
\end{equation}
where $\real(\rho) = \frac{1}{2}(\rho+\rho^T)$ is a real quantum state and $\imag(\rho) = \frac{1}{2i}(\rho-\rho^T)$ is a real anti-symmetric matrix. By spectral theorem $\imag(\rho)$ has an even rank $2r$ and there is a real orthogonal matrix $O$ such that $O \mathrm{Im}\,(\rho) O^T$ is block-diagonal \citep[p.~136]{horn_johnson_2012}:
\begin{equation}\label{eq:imaginarySpec}
O\,\mathrm{Im}\,\rho O^T=\bm{0}_{d-2r}\overset{r}{\underset{k=1}{\bigoplus}}\lambda_k
\begin{pmatrix}
 0&1 \\
 -1&0 
\end{pmatrix},
\end{equation}
where $\lambda_k>0$.

\section{Quantifying imaginarity \label{sec:QuantifyingImaginarity}}

Resource quantifiers for general quantum resource theories have been discussed in Section~\ref{sec:ResourceTheories}. For the resource theory of imaginarity, any imaginarity measure $\I$ should be zero on all real states:
\begin{equation}
    \I(\rho) = 0 \mathrm{\,\,\,\, for \, any \,\,\,\,} \rho \in \RS.
\end{equation}
Moreover, $\I$ should not increase under real operations:
\begin{equation}
    \I(\Lambda[\rho]) \leq \I(\rho) \label{eq:MonotonicityI}
\end{equation}
for any real operation $\Lambda$. A strong imaginarity monotone additionally fulfills 
\begin{equation}
    \sum_j q_j \I(\sigma_j) \leq \I(\rho) \label{eq:StrongMonotonicity}
\end{equation}
with $q_j = \tr[K_j \rho K_j^\dagger]$, $\sigma_j = K_j \rho K_j^\dagger/q_j$, and real Kraus operators $K_j$.

Monotonicity~(\ref{eq:MonotonicityI}) implies that all imaginarity measures are invariant under real orthogonal transformations:
\begin{equation}
    \I\left(O \rho O^T\right) = \I(\rho). \label{eq:MeasuresOrthogonal}
\end{equation}
To prove this, note that Eq.~(\ref{eq:MonotonicityI}) immediately implies 
\begin{equation}
    \I(O \rho O^T) \leq \I(\rho) \label{eq:ProofO-1}
\end{equation}
for all states $\rho$ and all real orthogonal matrices $O$. Defining $\sigma = O \rho O^T$ we further obtain 
\begin{equation}
    \I(\rho) = \I(O^T\sigma O) \leq \I(\sigma) = \I(O\rho O^T). \label{eq:ProofO-2}
\end{equation}
Combining Eqs.~(\ref{eq:ProofO-1}) and (\ref{eq:ProofO-2}) proves Eq.~(\ref{eq:MeasuresOrthogonal}).

For any pure state $\ket{\psi}$ any imaginarity measure $\I$ depends only on $|\braket{\psi^*}{\psi}|$:
\begin{equation}
    \I(\ket{\psi}) = f(|\braket{\psi^*}{\psi}|), \label{eq:PureMeasures}
\end{equation}
where the function $f$ depends on the concrete imaginarity measure $\I$. To see this, recall that for any pure state $\ket{\psi}$ there exists a real orthogonal matrix $O$ bringing $\ket{\psi}$ into the generic form~(\ref{eq:PureGeneric}). The proof of Eq.~(\ref{eq:PureMeasures}) is complete by using Eq.~(\ref{eq:MeasuresOrthogonal}) and noting that the generic form~(\ref{eq:PureGeneric}) depends only on $|\braket{\psi^*}{\psi}|$.

In the following, we will consider two concrete measures of imaginarity: geometric imaginarity and robustness of imaginarity.

\subsection{\label{sec:GeometricImaginarity} Geometric imaginarity}

For a pure state $\ket{\psi}$ we define the geometric imaginarity as
\begin{equation}
\I_g (\ket{\psi}) = 1 - \max_{\ket{\phi} \in \RS} |\braket{\phi}{\psi}|^2.
\end{equation}
For mixed states, we define $\I_g$ as the minimal average imaginarity, minimized over all decompositions of the state:
\begin{equation}
    \I_g(\rho)=\min \sum_j p_j \I_g (\ket{\psi_j}),
\end{equation}
where the minimum is taken over all ensembles $\{p_j,\ket{\psi_j}\}$ such that $\rho = \sum_j p_j \ket{\psi_j}\!\bra{\psi_j}$.
From the definition we see that $\I_g$ is convex: 
\begin{equation}
    \I_g\left(\sum_j p_j \rho_j\right)\leq\sum_j p_j \I_g(\rho_j).
\end{equation}
The definition of geometric imaginarity is analogous to fidelity-based quantifiers in other resource theories, in particular geometric measure of entanglement~\cite{Shimony1995,Barnum_2001,WeiPhysRevA.68.042307,Streltsov_2010} and geometric measure of coherence~\cite{streltsov2015measuring}. 

Due to Eq.~(\ref{eq:PureMeasures}), for pure states the geometric imaginarity must be a function of $|\braket{\psi^*}{\psi}|$. We will now go one step further and give an explicit expression for all pure states.

\begin{proposition} \label{prop:IgPure}
The geometric imaginarity of a pure state $\ket{\psi}$ is given as
\begin{equation}
    \I_g (\ket{\psi}) = \frac{1-|\braket{\psi^*}{\psi}|}{2}. \label{eq:IgAlpha}
\end{equation}
\end{proposition}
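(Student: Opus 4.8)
The plan is to exploit the generic form of Proposition~\ref{prop:PureGeneric} together with the fact that $\I_g$ is invariant under real orthogonal transformations, and then to carry out an elementary optimization on a single qubit. Write $\alpha = |\braket{\psi^*}{\psi}|$. Since for any real orthogonal $O$ the vector $O\ket{\phi}$ is real exactly when $\ket{\phi}$ is, and $|\braket{\phi}{O\psi}| = |\braket{O^T\phi}{\psi}|$, the maximization defining $\I_g$ is unchanged under the substitution $\ket{\phi}\mapsto O^T\ket{\phi}$; hence $\I_g(O\ket{\psi}) = \I_g(\ket{\psi})$, which is also the special case of Eq.~(\ref{eq:MeasuresOrthogonal}) relevant here. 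First I would therefore replace $\ket{\psi}$ by its generic form
\begin{equation*}
\ket{\psi} = \sqrt{\tfrac{1+\alpha}{2}}\,\ket{0} + i\sqrt{\tfrac{1-\alpha}{2}}\,\ket{1},
\end{equation*}
so that the task reduces to evaluating $\max_{\ket{\phi}\in\RS}|\braket{\phi}{\psi}|^2$ for this one fixed two-dimensional vector.

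Next I would show that the optimal real $\ket{\phi}$ may be taken inside the real plane $\vspan\{\ket{0},\ket{1}\}$. Decomposing an arbitrary real unit vector as $\ket{\phi} = \ket{\phi_\parallel} + \ket{\phi_\perp}$, with $\ket{\phi_\parallel}$ the component in this plane, the overlap obeys $\braket{\phi}{\psi} = \braket{\phi_\parallel}{\psi}$ because $\ket{\psi}$ itself lives in the plane; since $\ket{\phi_\parallel}$ is again real and has norm at most one, renormalizing it to a unit vector can only increase $|\braket{\phi}{\psi}|^2$. It therefore suffices to maximize over real states of the form $\ket{\phi} = \cos\beta\,\ket{0} + \sin\beta\,\ket{1}$ with $\beta\in\RR$.

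Finally, for such $\ket{\phi}$ a direct computation gives
\begin{equation*}
|\braket{\phi}{\psi}|^2 = \cos^2\beta\,\tfrac{1+\alpha}{2} + \sin^2\beta\,\tfrac{1-\alpha}{2},
\end{equation*}
which, because $\alpha = |\braket{\psi^*}{\psi}| \geq 0$, is maximized at $\beta = 0$ with value $\tfrac{1+\alpha}{2}$. Substituting into the definition then yields $\I_g(\ket{\psi}) = 1 - \tfrac{1+\alpha}{2} = \tfrac{1-\alpha}{2}$, which is precisely Eq.~(\ref{eq:IgAlpha}). The only non-mechanical step is the reduction to the two-dimensional real plane: one must verify that a component of $\ket{\phi}$ orthogonal to $\vspan\{\ket{0},\ket{1}\}$ never helps, after which everything collapses to the routine single-qubit optimization above. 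I expect this reduction, rather than the closing calculation, to be the part that needs care.
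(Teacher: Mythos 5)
Your proof is correct and takes essentially the same route as the paper: reduce $\ket{\psi}$ to the generic two-dimensional form of Proposition~\ref{prop:PureGeneric} using invariance of $\I_g$ under real orthogonal transformations, then maximize the overlap with real states by an elementary computation. The only difference is in that last step: the paper dispenses with your reduction-to-the-plane argument (the part you flag as needing care) by bounding the overlap directly for an arbitrary real state $\ket{\nu}=\sum_j b_j\ket{j}$ in the full space, via $|\braket{\nu}{\mu}|^2 = a_0^2 b_0^2 + a_1^2 b_1^2 \leq a_0^2$, which follows from $a_1\leq a_0$ and $\sum_j b_j^2=1$.
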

\begin{proof}
By Eq.~(\ref{eq:MeasuresOrthogonal}) and Proposition~\ref{prop:PureGeneric} it follows that 
\begin{equation}
    \I_g (\ket{\psi}) = \I_g\left(\sqrt{\frac{1+|\braket{\psi^*}{\psi}|}{2}}\ket{0} + i \sqrt{\frac{1-|\braket{\psi^*}{\psi}|}{2}}\ket{1}\right).
\end{equation}
To complete the proof, we will now evaluate $\I_g$ for any state of the form 
\begin{equation}
    \ket{\mu} = a_0\ket{0} + ia_1\ket{1}
\end{equation}
with $a_0 \geq a_1 \geq 0$ and $a_0^2 + a_1^2 = 1$. For any real state $\ket{\nu} = \sum_j b_j\ket{j}$ we have
\begin{equation}
    |\braket{\nu}{\mu}|^2 = |a_0b_0 + ia_1b_1|^2 = a_0^2 b_0^2 + a_1^2 b_1^2 \leq a_0^2,
\end{equation}
where the inequality follows from the fact that $\sum_j b_j^2 = 1$. Since $|\braket{0}{\mu}|^2=a_0^2$, we conclude that
\begin{equation}
    \max_{\ket{\nu} \in \RS} |\braket{\nu}{\mu}|^2 = a_0^2,
\end{equation}
and thus $\I_g(\ket{\mu}) = a_1^2$.
\end{proof}

Recall now that strong imaginarity monotones fulfill Eq.~(\ref{eq:StrongMonotonicity}). The following proposition shows that this is the case for geometric imaginarity.

\begin{proposition}
Geometric imaginarity is a strong imaginarity monotone.
\end{proposition}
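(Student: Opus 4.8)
The plan is to establish the strong monotonicity inequality~\eqref{eq:StrongMonotonicity} first for pure inputs, where the explicit formula of Proposition~\ref{prop:IgPure} makes everything transparent, and then to lift it to arbitrary mixed states by the standard convex-roof argument. Throughout I may assume the set $\{K_j\}$ is complete, i.e.\ $\sum_j K_j^\dagger K_j = \id$: an incomplete set can always be completed by real Kraus operators $\{L_i\}$, as shown around Eq.~\eqref{eq:CompletenessStochastic}, and since $\I_g \geq 0$, discarding the extra outcomes only decreases the left-hand side of~\eqref{eq:StrongMonotonicity}, so completeness is without loss of generality.

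For a pure input $\ket{\psi}$ each outcome is again pure, $\ket{\psi_j} = K_j\ket{\psi}/\sqrt{q_j}$ with $q_j = \bra{\psi}K_j^\dagger K_j\ket{\psi}$. The key observation is that, because $K_j$ is real, complex conjugation commutes with it, so $\ket{\psi_j^*} = K_j\ket{\psi^*}/\sqrt{q_j}$ and hence $\braket{\psi_j^*}{\psi_j} = \bra{\psi^*}K_j^\dagger K_j\ket{\psi}/q_j$. Inserting the closed form~\eqref{eq:IgAlpha} and using $\sum_j q_j = 1$ gives
\[
\sum_j q_j \I_g(\ket{\psi_j}) = \frac{1}{2}\left(1 - \sum_j \left|\bra{\psi^*}K_j^\dagger K_j\ket{\psi}\right|\right).
\]
Since completeness also yields $\braket{\psi^*}{\psi} = \sum_j \bra{\psi^*}K_j^\dagger K_j\ket{\psi}$, the triangle inequality gives $|\braket{\psi^*}{\psi}| \leq \sum_j |\bra{\psi^*}K_j^\dagger K_j\ket{\psi}|$, which is exactly the statement that the displayed quantity is at most $\I_g(\ket{\psi}) = (1 - |\braket{\psi^*}{\psi}|)/2$. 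This settles the pure-state case.

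To pass to mixed states I would fix a decomposition $\rho = \sum_k p_k \ketbra{\phi_k}{\phi_k}$ attaining the convex roof, so that $\I_g(\rho) = \sum_k p_k \I_g(\ket{\phi_k})$. Applying the Kraus operators to each term produces pure states $\ket{\phi_{jk}} = K_j\ket{\phi_k}/\sqrt{q_{jk}}$ with $q_{jk} = \bra{\phi_k}K_j^\dagger K_j\ket{\phi_k}$, and the collection $\{p_k q_{jk}/q_j,\,\ket{\phi_{jk}}\}_k$ is a valid ensemble for $\sigma_j = K_j\rho K_j^\dagger/q_j$. By definition of $\I_g$ as a minimum over ensembles, $\I_g(\sigma_j) \leq \sum_k (p_k q_{jk}/q_j)\,\I_g(\ket{\phi_{jk}})$; multiplying by $q_j$, summing over $j$, and exchanging the order of summation yields $\sum_j q_j \I_g(\sigma_j) \leq \sum_k p_k \sum_j q_{jk}\,\I_g(\ket{\phi_{jk}})$. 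For each fixed $k$ the inner sum is precisely the pure-state situation already handled (note $\sum_j q_{jk} = 1$ by completeness), hence it is bounded by $\I_g(\ket{\phi_k})$, and the whole expression is at most $\sum_k p_k \I_g(\ket{\phi_k}) = \I_g(\rho)$.

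I expect the only genuinely substantive step to be the pure-state estimate, which hinges on recognizing that reality of $K_j$ lets conjugation pass through the Kraus operators and thereby reduces the claim to the triangle inequality applied to $\sum_j \bra{\psi^*}K_j^\dagger K_j\ket{\psi}$. The mixed-state part is routine convex-roof bookkeeping; the only care needed there is to check that the induced families are honest ensembles and to track completeness so that all probabilities sum to one.
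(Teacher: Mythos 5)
Your proof is correct, and its substantive pure-state step takes a genuinely different route from the paper's, while the mixed-state lift is the same convex-roof argument. The paper first reduces $\ket{\psi}$ to the canonical form $\cos\alpha\ket{0}+i\sin\alpha\ket{1}$ via Proposition~\ref{prop:PureGeneric}, then lower-bounds each term $\max_{\ket{\phi_j}\in\RS}|\braket{\phi_j}{K_j|\alpha}|^2$ by inserting the trial real states $K_j\ket{0}/\sqrt{s_j}$ with $s_j=\braket{0}{K_j^TK_j|0}$, and closes with $\sum_j K_j^TK_j=\id$; that is, it works directly with the overlap definition of $\I_g$. You instead combine the closed formula of Proposition~\ref{prop:IgPure} with the observation that complex conjugation commutes with real Kraus operators, so that $\braket{\psi_j^*}{\psi_j}=\bra{\psi^*}K_j^\dagger K_j\ket{\psi}/q_j$; this lets you compute the average output imaginarity exactly, $\sum_j q_j\I_g(\ket{\psi_j})=\frac{1}{2}\bigl(1-\sum_j|\bra{\psi^*}K_j^\dagger K_j\ket{\psi}|\bigr)$, and the claim collapses to the triangle inequality applied to $\braket{\psi^*}{\psi}=\sum_j\bra{\psi^*}K_j^\dagger K_j\ket{\psi}$. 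Your version is shorter, needs no canonical form and no choice of trial states, and makes transparent exactly where reality of the Kraus operators enters; the paper's version has the virtue of bounding the defining maximization directly, though it too invokes the evaluation $\I_g(\ket{\alpha})=\sin^2\alpha$ from Proposition~\ref{prop:IgPure}. A further small merit of your write-up is the explicit reduction to complete Kraus sets (completing with real operators as in Eq.~\eqref{eq:CompletenessStochastic} and using $\I_g\geq 0$ to discard the extra outcomes), a point both your argument and the paper's need (each uses $\sum_j q_j=1$, respectively $\sum_j q_{jk}=1$) but which the paper leaves implicit.
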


\begin{proof}

To prove Eq.~\eqref{eq:StrongMonotonicity} in general, we will first prove it for pure states. By Proposition~\ref{prop:PureGeneric}, it is enough to prove it for states 
\begin{equation}
    \ket{\alpha} = \cos\alpha \ket{0} + i \sin\alpha \ket{1}
\end{equation}
with $\alpha \in [0,\pi/4]$, for which the geometric imaginarity is given by $\I_g(\ket{\alpha}) = \sin^2 \alpha$. For a pure initial state, all post-measurement states $\sigma_j$ are also pure. Thus, proving Eq.~(\ref{eq:StrongMonotonicity}) for pure states reduces to proving the inequality
\begin{equation}
    \sum_j\max_{\ket{\phi_j}\in\RS} |\braket{\phi_j}{K_j|\alpha}|^2 \geq \cos^2 \alpha, \label{eq:IgPureProof}
\end{equation}
where $\{K_j\}$ is a set of real Kraus operators. To prove Eq.~\eqref{eq:IgPureProof}, we first note that 
\begin{equation}
    \sum_j\max_{\ket{\phi_j}\in\RS} |\braket{\phi_j}{K_j|\alpha}|^2 \geq \sum_j \frac{|\braket{0}{K_j^T K_j|\alpha}|^2}{s_j},
\end{equation}
where we introduced 
\begin{equation}
    s_j = \braket{0}{K_j^T K_j|0}. \label{eq:si}
\end{equation}
 Recalling that all Kraus operators $K_j$ are real and using the explicit form of $\ket{\alpha}$ we obtain
\begin{align}
    |\braket{0}{K_j^T K_j|\alpha}|^2 &=  |\braket{0}{K_j^T K_j|0}|^2 \cos^2 \alpha \\
    &+  |\braket{0}{K_j^T K_j|1}|^2 \sin^2 \alpha \nonumber \\ 
    &\geq |\braket{0}{K_j^T K_j|0}|^2 \cos^2 \alpha, \nonumber
\end{align}
which further implies that 
\begin{equation}
    \sum_j\max_{\ket{\phi_j}\in\RS} |\braket{\phi_j}{K_j|\alpha}|^2 \geq \sum_j \frac{|\braket{0}{K_j^T K_j|0}|^2}{s_j} \cos^2 \alpha.
\end{equation}
Using the definition of $s_j$ in Eq.~\eqref{eq:si} and the fact that $\sum_j K_j^T K_j = \id$, we obtain the desired inequality~\eqref{eq:IgPureProof}.

The above arguments prove that $\I_g$ satisfies Eq.~\eqref{eq:StrongMonotonicity} when $\rho$ is pure. To extend this result to mixed states, consider an optimal decomposition of a mixed state $\rho = \sum_j p_j \ket{\psi_j}\!\bra{\psi_j}$, such that 
\begin{equation}
    \I_g(\rho)=\sum_j p_j \I_g (\ket{\psi_j}).
\end{equation}
Introducing the quantity $s_{jk}=\braket{\psi_k}{K_j^T K_j|\psi_k}$ we obtain
\begin{align}
    \sum_j q_j \I_g\left(\frac{K_j\rho K_j^T}{q_j}\right) 
    &= \sum_j q_j \I_g\left(\sum_k p_k \frac{K_j \ket{\psi_k}\!\bra{\psi_k} K_j^T}{q_j}\right) \nonumber \\
    &= \sum_j q_j \I_g\left(\sum_k \frac{p_k s_{jk}}{q_j} \times \frac{K_j \ket{\psi_k}\!\bra{\psi_k} K_j^T}{s_{jk}}\right) \nonumber \\
     &\leq \sum_{j,k} p_k s_{jk} \I_g\left( \frac{K_j \ket{\psi_k}\!\bra{\psi_k} K_j^T}{s_{jk}}\right) \nonumber \\
     &\leq \sum_j p_j \I_g (\ket{\psi_j}) = \I_g (\rho),
\end{align}
where in the first inequality we used the facts that $\I_g$ is convex. This completes the proof of Eq.~(\ref{eq:StrongMonotonicity}) for all mixed states.
\end{proof}

As we will show in Section~\ref{sec:StateTransformations}, geometric imaginarity for pure states admits an operational interpretation in the state conversion task.

\subsection{\label{Sec:Robustness} Robustness of imaginarity} 

For a general resource theory the robustness measure has been defined in Section~\ref{sec:ResourceTheories}. Following this approach, the robustness of imaginarity is defined as~\cite{Hickey+Gour.JPA.2018}
\begin{equation}
    \I_R(\rho)=\min_\tau\left\{s \geq 0:\frac{\rho+s\tau}{1+s}\in\RS\right\}, \label{eq:Robustness}
\end{equation}
where the minimum is taken over all quantum states $\tau$ and all $s \geq 0$. The following proposition gives a closed expression for the robustness of imaginarity of any quantum state $\rho$.

\begin{proposition} \label{prop:Robustness}
The robustness of imaginarity is equal to
\begin{equation}
    \I_R(\rho)=\frac{1}{2}\|\rho-\rho^T\|_1,
\end{equation}
where $T$ denotes transposition and $\norm{M}_1=\tr \sqrt{M^\dagger M}$ is the trace norm.
\end{proposition}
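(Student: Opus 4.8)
The plan is to prove the claimed identity as a two-sided bound, after translating everything into the anti-symmetric part $\imag(\rho)=\frac{1}{2i}(\rho-\rho^T)$. Since $\rho-\rho^T=2i\,\imag(\rho)$, unitary invariance of the trace norm gives $\frac{1}{2}\norm{\rho-\rho^T}_1=\norm{\imag(\rho)}_1$, so it suffices to show $\I_R(\rho)=\norm{\imag(\rho)}_1$. The central reformulation is that $\frac{\rho+s\tau}{1+s}\in\RS$ holds if and only if its anti-symmetric part vanishes, i.e.\ $\imag(\rho)+s\,\imag(\tau)=0$; hence in any feasible pair $(s,\tau)$ the imaginary part of $\tau$ is forced to equal $-\imag(\rho)/s$.

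For the \emph{lower bound} I take an arbitrary feasible $(s,\tau)$ and apply the forced relation to get $\norm{\imag(\rho)}_1=s\,\norm{\imag(\tau)}_1$. I then bound $\norm{\imag(\tau)}_1=\frac{1}{2}\norm{\tau-\tau^T}_1\le\frac{1}{2}\left(\norm{\tau}_1+\norm{\tau^T}_1\right)=1$, using the triangle inequality together with $\norm{\tau}_1=\norm{\tau^T}_1=\tr\tau=1$ for a state $\tau\ge0$. Thus every feasible $s$ satisfies $s\ge\norm{\imag(\rho)}_1$, which yields $\I_R(\rho)\ge\norm{\imag(\rho)}_1$.

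For the \emph{upper bound} I must exhibit a state $\tau$ making $s=\norm{\imag(\rho)}_1$ feasible. Both $\I_R$ and $\frac{1}{2}\norm{\rho-\rho^T}_1$ are invariant under $\rho\mapsto O\rho O^T$ with $O$ real orthogonal---the former by Eq.~\eqref{eq:MeasuresOrthogonal}, the latter by unitary invariance of the trace norm---so I may assume $\imag(\rho)$ is in the block-diagonal normal form~\eqref{eq:imaginarySpec}, with $2\times2$ blocks $\lambda_k\left(\begin{smallmatrix}0&1\\-1&0\end{smallmatrix}\right)$ and $\lambda_k>0$. Then $s=\norm{\imag(\rho)}_1=2\sum_k\lambda_k$. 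I define $\tau$ as the direct sum placing weight $2\lambda_k/s$ on the maximally imaginary qubit state $\ketbra{\MI}{\MI}$ supported on the $k$-th block (and nothing on the kernel of $\imag(\rho)$); since $\sum_k 2\lambda_k/s=1$, this is a valid density matrix. A direct check that $\ketbra{\MI}{\MI}$ has single-block imaginary part $-\frac{1}{2}\left(\begin{smallmatrix}0&1\\-1&0\end{smallmatrix}\right)$ shows $\imag(\tau)=-\imag(\rho)/s$, so the anti-symmetric part of $\rho+s\tau$ cancels and $\sigma=(\rho+s\tau)/(1+s)$ is real; positivity and unit trace of $\sigma$ are automatic because $\rho,\tau\ge0$. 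Hence $\I_R(\rho)\le s=\norm{\imag(\rho)}_1$, and combining the two bounds proves the proposition.

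The main obstacle is the upper-bound construction. The lower-bound step shows that reaching the target value $s$ is possible only if $\tau$ simultaneously (i)~has imaginary part exactly proportional to $\imag(\rho)$ with the sign that cancels it, and (ii)~saturates $\norm{\imag(\tau)}_1=1$; any slack in the latter forces a strictly larger $s$. Reducing to the normal form~\eqref{eq:imaginarySpec} and distributing maximally imaginary qubit states over the $2\times2$ blocks is precisely what meets both constraints at once while keeping $\tau\ge0$, after which positivity of $\sigma$ requires no separate argument.
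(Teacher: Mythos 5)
Your proposal is correct and takes essentially the same route as the paper: the identical triangle-inequality lower bound (the paper phrases it as $\rho-\rho^T=\I_R(\rho)\left[(\tau^*)^T-\tau^*\right]$ rather than through $\imag(\cdot)$), and the identical optimal state for the upper bound, since the paper's explicit block matrices $\frac{1}{2\sum_m\lambda_m}\left(\begin{smallmatrix}\lambda_k & -i\lambda_k\\ i\lambda_k & \lambda_k\end{smallmatrix}\right)$ are precisely your mixture of maximally imaginary states $\ketbra{\MI}{\MI}$ with weights $\lambda_k/\sum_m\lambda_m$ on the $2\times 2$ blocks of the normal form~\eqref{eq:imaginarySpec}. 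The only cosmetic difference is that you invoke orthogonal invariance of $\I_R$ to pass to the rotated frame, whereas the paper writes $\tau^*$ directly in the original frame as $O^T(\cdot)\,O$ and treats even and odd dimensions separately.
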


\begin{proof}
Let  $\tau^*$ be a quantum state achieving the minimum in Eq.~\eqref{eq:Robustness}. Then, the matrix $\rho+\I_R(\rho)\tau^*$ is real and Hermitian, and thus
\begin{equation}
\rho+\I_R(\rho)\tau^*=\rho^T+\I_R(\rho)(\tau^*)^T.
\end{equation}
We can now obtain a lower bound on the robustness of imaginarity as follows:
\begin{equation}
    \|\rho-\rho^T\|_1=\I_R(\rho)\|(\tau^*)^T-\tau^*\|_1\leq2\I_R(\rho),
\end{equation}
where we used the fact that $\|(\tau^*)^T-\tau^*\|_1\leq\norm{(\tau^*)^T}_1+\norm{\tau^*}_1= 2$. Thus, we have the bound
\begin{equation}
\I_R(\rho)\geq\frac{1}{2}\|\rho-\rho^T\|_1.
\end{equation}

To complete the proof, we will present a state  $\tau^*$ such that $\rho+s\tau^*$ is a real matrix with $s=\frac{1}{2}\|\rho-\rho^T\|_1$. To see this, recall that there exists a real orthogonal matrix $O$ such that $O\imag \rho O^T$ is block-diagonal as in Eq.~\eqref{eq:imaginarySpec} with coefficients $\lambda_m\geq 0$. If the dimension of the Hilbert space is even, we define $\tau^*$ to be a block-diagonal matrix of the form 
\begin{equation}
\tau^*=\frac{1}{2\sum_m\lambda_m}O^T
\begin{pmatrix}\lambda_{1} & -i\lambda_{1}\\
i\lambda_{1} & \lambda_{1}\\
 &  & \lambda_{2} & -i\lambda_{2}\\
 &  & i\lambda_{2} & \lambda_{2}\\
 &  &  &  & \ddots\\
 &  &  &  &  & \lambda_{k} & -i\lambda_{k}\\
 &  &  &  &  & i\lambda_{k} & \lambda_{k}
\end{pmatrix}
O.
\end{equation}
Note that $\rho+2\sum_m\lambda_m\tau^*$ is a real matrix, and moreover
\begin{equation}
\frac{1}{2}\|\rho-\rho^T\|_1=\|\imag \rho\|_1=2\sum_m \lambda_m.
\end{equation}This completes the proof for Hilbert space with even dimension. For odd dimension, the proof follows the same lines of reasoning, if we define the state $\tau^*$ as
\begin{equation}
\tau^*=\frac{1}{2\sum_m\lambda_m}O^T
\begin{pmatrix}\lambda_{1} & -i\lambda_{1} &  &  &  &  &  & 0\\
i\lambda_{1} & \lambda_{1} &  &  &  &  &  & 0\\
 &  & \lambda_{2} & -i\lambda_{2} &  &  &  & 0\\
 &  & i\lambda_{2} & \lambda_{2} &  &  &  & 0\\
 &  &  &  & \ddots &  &  & \vdots\\
 &  &  &  &  & \lambda_{k} & -i\lambda_{k} & 0\\
 &  &  &  &  & i\lambda_{k} & \lambda_{k} & 0\\
0 & 0 & 0 & 0 & \cdots & 0 & 0 & 0
\end{pmatrix}
O.
\end{equation}
This completes the proof of the proposition.
\end{proof}

Proposition~\ref{prop:Robustness} implies that the robustness of imaginarity coincides with the distance-based measure of imaginarity studied in~\cite{Hickey+Gour.JPA.2018}. For single-qubit states with Bloch vector $\boldsymbol{r}=(r_x,r_y,r_z)$ the robustness of imaginarity simplifies to 
\begin{equation}\label{eq:QubitRobustness}
    \I_R (\rho)=|r_y|.
\end{equation}
For pure states the robustness of imaginarity can be written as 
\begin{equation}\label{eq:RobustPhi}
    \I_R(\ket{\psi})=\sqrt{1-|\braket{\psi^*}{\psi}|^2}.
\end{equation}
This follows directly from Proposition~\ref{prop:Robustness} and the fact that $||\ket{\psi}\!\bra{\psi} - \ket{\phi}\!\bra{\phi}||_1 = 2\sqrt{1-|\braket{\psi}{\phi}|^2}$ holds true for any two pure states $\ket{\psi}$ and $\ket{\phi}$.

Due to Proposition~\ref{prop:Robustness}, the robustness of imaginarity has the property
\begin{equation}
    \I_R(p \rho_1 \oplus [1-p]\rho_2) = p \I_R (\rho_1) + (1-p)\I_R(\rho_2),
\end{equation}
which has previously been explored within the resource theory of quantum coherence~\cite{YuPhysRevA.94.060302}.

In the next section we will apply these results to quantum state conversion in imaginarity theory.

\section{State transformations via real operations \label{sec:StateTransformations}}

We will now discuss state transformations in the resource theory of imaginarity. Deterministic transformations for pure states have been considered in \cite{Hickey+Gour.JPA.2018}. The results presented in \cite{Hickey+Gour.JPA.2018} together with Proposition~\ref{prop:Robustness} imply that the conversion $\ket{\psi} \rightarrow \ket{\phi}$ is possible via real operations if and only if $\I_R(\ket{\psi})\geq \I_R(\ket{\phi})$. In the following, we will consider stochastic conversion for pure states.

\subsection{Stochastic transformations for pure states}

We now provide the maximal probability for converting a pure state $\ket{\psi}$ into another pure state $\ket{\phi}$ via real operations. The following theorem was announced in ~\cite{PRLversion}, and we give a full proof of it in this article.

\begin{theorem} \label{thm:PureConversion}
The maximum probability for a pure state transformation $\ket{\psi} \rightarrow \ket{\phi}$ via real operations is given by 
\begin{equation}
    P(\,\ket{\psi} \rightarrow \ket{\phi}\,) = \min \left\{\,\frac{1-|\,\braket{\psi^*\,}{\,\psi}\,|}{1-|\,\braket{\phi^*\,}{\,\phi}\,|}, 1\,\right\}.
\end{equation}
\end{theorem}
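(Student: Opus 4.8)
The plan is to prove the two matching inequalities $P(\ket{\psi}\to\ket{\phi}) \leq p$ and $P(\ket{\psi}\to\ket{\phi}) \geq p$ separately, where $p$ denotes the right-hand side of the claimed formula. First I would invoke Proposition~\ref{prop:PureGeneric}: since real orthogonal transformations are free and reversible, they leave the conversion probability unchanged, so I may assume without loss of generality that both states are already in the generic single-qubit form $\ket{\psi} = \cos\alpha\ket{0} + i\sin\alpha\ket{1}$ and $\ket{\phi} = \cos\beta\ket{0} + i\sin\beta\ket{1}$ with $\alpha,\beta \in [0,\pi/4]$. Any orthogonal rotations needed to bring the original states into this form, or to map the canonical target back to $\ket{\phi}$, will be folded into the Kraus operator at the end. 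In this parametrization $|\braket{\psi^*}{\psi}| = \cos 2\alpha$ and $\I_g(\ket{\psi}) = \sin^2\alpha$, so the target probability reads $p = \min\{\sin^2\alpha/\sin^2\beta, 1\}$.

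For the upper bound I would apply the general estimate Eq.~\eqref{eq:Pbound} to the geometric imaginarity, which was shown to be convex and a strong imaginarity monotone. Combined with the closed form $\I_g(\ket{\psi}) = (1-|\braket{\psi^*}{\psi}|)/2$ of Proposition~\ref{prop:IgPure}, this gives immediately
\begin{equation}
P(\ket{\psi}\to\ket{\phi}) \leq \min\left\{\frac{\I_g(\ket{\psi})}{\I_g(\ket{\phi})},1\right\} = p,
\end{equation}
since the factors $1/2$ cancel in the ratio.

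For the lower bound I would distinguish two cases. When $\I_g(\ket{\psi}) \geq \I_g(\ket{\phi})$, i.e.\ $\alpha \geq \beta$, the orderings induced by $\I_g$ and by the robustness $\I_R(\ket{\psi}) = \sqrt{1-|\braket{\psi^*}{\psi}|^2} = \sin 2\alpha$ coincide on $[0,\pi/4]$, so $\I_R(\ket{\psi}) \geq \I_R(\ket{\phi})$ and the deterministic-conversion criterion (from \cite{Hickey+Gour.JPA.2018} together with Proposition~\ref{prop:Robustness}) yields $P=1=p$. The substantive case is $\alpha < \beta$, where $p=\sin^2\alpha/\sin^2\beta < 1$. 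Here I would exhibit the single real, diagonal Kraus operator
\begin{equation}
K = \sqrt{p}\left(\frac{\cos\beta}{\cos\alpha}\ketbra{0}{0} + \frac{\sin\beta}{\sin\alpha}\ketbra{1}{1}\right).
\end{equation}
A direct computation gives $K\ket{\psi} = \sqrt{p}\,\ket{\phi}$, so the post-measurement state is exactly $\ket{\phi}$ and the success probability is $\tr[K\ketbra{\psi}{\psi}K^\dagger] = p$. It remains to confirm that $K$ is a legitimate sub-normalized Kraus operator, $K^\dagger K \leq \id$: the two diagonal entries of $K^\dagger K$ are $\tan^2\alpha/\tan^2\beta$ and $1$, both at most one since $\alpha<\beta$. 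Because $K^\dagger K$ is real symmetric, the complement $\sqrt{\id - K^\dagger K}$ is real, so $\{K\}$ completes to a valid real operation.

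The main obstacle is recognizing the lower-bound construction: the real and imaginary components of $\ket{\psi}$ lie along $\ket{0}$ and $\ket{1}$ respectively and are rescaled independently by a real diagonal map, and the value of $p$ is fixed precisely by pushing the $\ket{1}$-component to the boundary $K^\dagger K = \id$ on that coordinate, which is what saturates the monotone bound with a single Kraus element. The only remaining work is bookkeeping: absorbing the (possibly distinct, and dimension-changing) orthogonal transformations of Proposition~\ref{prop:PureGeneric} for $\ket{\psi}$ and $\ket{\phi}$ into $K$, which is straightforward because conjugating by orthogonal matrices preserves both the output state and the condition $K^\dagger K \leq \id$.
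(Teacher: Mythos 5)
Your proposal is correct and follows essentially the same route as the paper: the upper bound via Eq.~\eqref{eq:Pbound} applied to the geometric imaginarity with Proposition~\ref{prop:IgPure}, the deterministic case via the criterion of \cite{Hickey+Gour.JPA.2018}, and a rank-one real Kraus element saturating the bound after reduction to the generic form of Proposition~\ref{prop:PureGeneric}. In fact your operator $K$ coincides exactly with the paper's $K_0=\diag(a,1)$, since $\sqrt{p}\,\cos\beta/\cos\alpha=\tan\alpha/\tan\beta=a$ and $\sqrt{p}\,\sin\beta/\sin\alpha=1$; only the parametrization (angles versus overlaps) differs.
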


\begin{proof}

The proof will use properties of pure states within imaginarity theory (see Section~\ref{sec:Imaginarity}) and geometric imaginarity $\I_g$ (see Section~\ref{sec:GeometricImaginarity}). Since $\I_g$ is convex and a strong imaginarity monotone, the transition probability $P(\rho \rightarrow \sigma)$ is bounded as [see also Eq.~(\ref{eq:Pbound})]:
\begin{equation}
    P(\rho\rightarrow\sigma) \leq \frac{\I_g (\rho)}{\I_g(\sigma)}.
\end{equation}
In the case of pure states we can use Proposition~\ref{prop:IgPure} to obtain 
\begin{equation}
    P(\ket{\psi}\rightarrow \ket{\phi}) \leq \frac{1-|\braket{\psi^*}{\psi}|}{1-|\braket{\phi^*}{\phi}|}. \label{eq:StochasticPureBound}
\end{equation}

We will now consider the case 
\begin{equation}
    |\braket{\psi^*}{\psi}| \geq |\braket{\phi^*}{\phi}|, \label{eq:PureStochasticProof-1}
\end{equation}
and show that there exists a real operation saturating the bound~(\ref{eq:StochasticPureBound}). To see this, we first apply a real orthogonal transformation to the state $\ket{\psi}$, bringing it into the form
\begin{equation}
    \ket{\psi'} = \sqrt{\frac{1+|\braket{\psi^*}{\psi}|}{2}}\ket{0} + i \sqrt{\frac{1-|\braket{\psi^*}{\psi}|}{2}}\ket{1},
\end{equation}
see Proposition~\ref{prop:PureGeneric}. Then, we apply a real operation with the Kraus operators
\begin{equation}
    K_{0}=\left(\begin{array}{cc}
a & 0\\
0 & 1
\end{array}\right),\,\,\,\,\,K_{1}=\sqrt{\id-K_{0}^{2}},
\end{equation}
where $a$ is defined as 
\begin{equation}
    a = \sqrt{\frac{1-|\braket{\psi^*}{\psi}|}{1-|\braket{\phi^*}{\phi}|} \times \frac{1+|\braket{\phi^*}{\phi}|}{1+|\braket{\psi^*}{\psi}|}}.
\end{equation}
Note that $a \leq 1$ by Eq.~\eqref{eq:PureStochasticProof-1}. As it can be verified by inspection, the Kraus operator $K_0$ transforms $\ket{\psi'}$ into the state
\begin{equation}
    \ket{\phi'} = \sqrt{\frac{1+|\braket{\phi^*}{\phi}|}{2}}\ket{0} + i \sqrt{\frac{1-|\braket{\phi^*}{\phi}|}{2}}\ket{1},
\end{equation}
with probability 
\begin{equation}
    p = \frac{1-|\braket{\psi^*}{\psi}|}{1-|\braket{\phi^*}{\phi}|}. 
\end{equation}
Note that $\ket{\phi'}$ is equivalent to the desired state $\ket{\phi}$ up to a real orthogonal transformation, see Proposition~\ref{prop:PureGeneric}.

For the remaining case $|\braket{\psi^*}{\psi}| < |\braket{\phi^*}{\phi}|$, the transformation $\ket{\psi}\rightarrow \ket{\phi}$ can be achieved with unit probability~\cite{Hickey+Gour.JPA.2018}. 
\end{proof}

Theorem~\ref{thm:PureConversion} allows to determine the optimal probability for transitions between pure states in imaginarity theory. Moreover, it equips the geometric imaginarity $\I_g$ with an operational meaning: the maximal probability to convert $\ket{\psi}$ into $\ket{\phi}$ via real operations can be written as 
\begin{equation}
    P(\ket{\psi}\rightarrow \ket{\phi}) = \min\left\{\frac{\I_g(\ket{\psi})}{\I_g(\ket{\phi})},1\right\}.
\end{equation}

\subsection{Deterministic transformations for all single-qubit states}

So far we focused on transformations between pure states. We will now go one step further, and consider mixed states of a single qubit. Note that any single-qubit state can be represented by a real 3-dimensional Bloch vector. The following theorem provides a complete solution for the conversion problem via real operations for all qubit states. 

\begin{theorem} \label{thm:QubitTransitions}
For qubit states $\rho$ and $\sigma$ the transition $\rho \rightarrow \sigma$ is possible via real operations if and only if 
\begin{subequations}
\begin{align}
    s_{y}^{2} &\leq r_{y}^{2},\\
    \frac{1-s_{z}^{2}-s_{x}^{2}}{s_{y}^{2}} &\geq \frac{1-r_{z}^{2}-r_{x}^{2}}{r_{y}^{2}}, \label{eq:DeterministicQubit}
\end{align}
\end{subequations}
where $\boldsymbol{r}$ and $\boldsymbol{s}$ are the Bloch vectors of the initial and the target state, respectively.
\end{theorem}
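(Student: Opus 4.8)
\emph{Setup and strategy.} The plan is to pass to the Bloch representation $\rho=\frac12(\id+\boldsymbol{r}\cdot\boldsymbol{\sigma})$ and to use that state conversion is invariant under real orthogonal conjugations: if $O$ is real orthogonal then both $O$ and $O^T$ are real operations, so $\rho\to\sigma$ is possible iff $O\rho O^T\to O'\sigma O'^T$ is. On a qubit these conjugations generate rotations about the $y$-axis together with the flips induced by $\sigma_x,\sigma_z$, all of which preserve $r_y^2$ and $r_x^2+r_z^2$. I would therefore first rotate both states into the canonical form $\boldsymbol{r}=(0,r_y,r_z)$, $\boldsymbol{s}=(0,s_y,s_z)$ with nonnegative entries, under which the two inequalities are unchanged. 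The first then reads $s_y^2\le r_y^2$ (equivalently $\I_R(\sigma)\le\I_R(\rho)$, using $\I_R=|r_y|$), and the second rearranges, after clearing denominators and writing $\lambda=s_y/r_y$, to $1-s_x^2-s_z^2\ge\lambda^2(1-r_x^2-r_z^2)$. Necessity and sufficiency are then proved separately.

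\emph{Necessity.} The key structural fact is that a real channel commutes with transposition, $\Lambda[\rho^T]=\Lambda[\rho]^T$, since its Kraus operators are real. Transposition acts on the Bloch vector as $R_y=\diag(1,-1,1)$, so the affine map $\boldsymbol{r}\mapsto M\boldsymbol{r}+\boldsymbol{t}$ of any real channel commutes with $R_y$; this forces $t_y=0$ and makes $M$ block diagonal, decoupling the $y$-direction from the $(x,z)$-plane. In particular $s_y=\lambda r_y$ for a single scalar $\lambda$, and positivity applied to the input $\boldsymbol{r}=(0,1,0)$ gives $\lambda^2\le1$, i.e.\ the first condition. The second condition is equivalent to the complete-positivity bound
\[
1-s_x^2-s_z^2\ \ge\ \lambda^2\,(1-r_x^2-r_z^2),
\]
which I would derive from positivity of the (real, symmetric) Choi matrix of the block-structured channel. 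I expect this to be the only computational step of the necessity part; a reassuring check is the real amplitude-damping channel, for which the difference of the two sides equals $\gamma(1-\gamma)(1-r_z)^2\ge0$.

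\emph{Sufficiency.} Here I must exhibit, for every $(r_y,r_z,s_y,s_z)$ obeying both inequalities, an explicit real channel realizing $(0,r_y,r_z)\mapsto(0,s_y,s_z)$. In the variables $(\lambda,s_z)$ the admissible region is $\{\lambda^2(1-r_z^2)+s_z^2\le1,\ 0\le\lambda\le1\}$, an ellipse truncated by $\lambda=1$, and the construction must reach every point of it. The idea is two-stage: first move onto the relevant boundary at the target value of $s_z$ with the largest permissible $s_y$, and then apply the real dephasing $\rho\mapsto p\rho+(1-p)\sigma_z\rho\sigma_z$, which shrinks $s_y$ by $2p-1$ while leaving $s_z$ fixed, to descend to the target $s_y$; feasibility of this second stage is exactly the second inequality. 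For the boundary $\lambda=1$ (reached when $s_z\le r_z$) the first stage is easy: mixing the identity with a real rotation about the $y$-axis preserves $r_y$ and contracts the $(x,z)$-radius, and a final $y$-rotation returns the state to the $yz$-plane.

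\emph{Main obstacle.} The hard case is $s_z>r_z$, where one must push the populations toward a pole while saturating the curved boundary $1-s_z^2=\lambda^2(1-r_z^2)$. I expect this to be the crux, because the obvious candidates all fall strictly short: a real amplitude-damping channel is interior, and so is the natural two-Kraus family with diagonal and antidiagonal real operators, for which one computes $1-s_z^2-\lambda^2(1-r_z^2)=[(1+r_z)|\sin u\cos u|+(1-r_z)|\sin v\cos v|]^2$, vanishing only at isolated angles; moreover, since the set of reachable outputs is convex, mixing identity, pole-resets and dephasings still cannot generate the curved arc (a short calculation shows their convex hull yields only $s_y\le r_y(1-s_z)/(1-r_z)$, which is below the ellipse bound). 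The construction therefore needs a dedicated real channel---the mixed-state analogue of the optimal pure-state conversion channel of Theorem~\ref{thm:PureConversion}, to which it must reduce as $|\boldsymbol{r}|\to1$ (where the ellipse condition collapses to $s_y^2=1-s_z^2$ and the target is forced pure). I would produce it via a real Stinespring dilation with a single-qubit ancilla, i.e.\ a suitable $O\in O(4)$, parametrized so that one angle sweeps the boundary arc, and then verify reality, trace preservation and complete positivity directly.
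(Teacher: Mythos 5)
Your necessity argument is sound, and it is genuinely different from the paper's: the paper proves necessity indirectly, by averaging a plane-preserving channel with its complex conjugate (Lemma~\ref{lem:QubitTransformations-1}) and then importing the known qubit conversion conditions under maximally incoherent operations from coherence theory, whereas your transposition-covariance argument ($\Lambda[\rho^T]=\Lambda[\rho]^T$ forces $t_y=0$ and a block-diagonal Bloch matrix) is self-contained. You can even skip the Choi computation you defer: because the $(x,z)$ output components depend only on the $(x,z)$ input components, apply the channel to the \emph{pure} state with Bloch vector $(r_x,\sqrt{1-r_x^2-r_z^2},r_z)$; its output has the same $(s_x,s_z)$ as the actual input and $y$-component $\lambda\sqrt{1-r_x^2-r_z^2}$, so the requirement that the output Bloch vector have length at most one is literally $1-s_x^2-s_z^2\geq\lambda^2(1-r_x^2-r_z^2)$. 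Mere positivity suffices; no complete-positivity analysis is needed.

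The genuine gap is in sufficiency, and it is created by a concrete sign error which then sends you hunting for a channel you never construct. For the family $K_1=\diag(\cos u,\sin v)$, $K_2=\left(\begin{smallmatrix}0&\cos v\\ \sin u&0\end{smallmatrix}\right)$ acting on an input with Bloch vector $(0,r_y,r_z)$, one finds $s_y=\sin(v-u)\,r_y$, $s_z=\tfrac12[(1+r_z)\cos 2u+(1-r_z)\cos 2v]$, and the exact identity
\begin{equation*}
1-s_z^2-\lambda^2(1-r_z^2)=\tfrac14\bigl[(1+r_z)\sin 2u+(1-r_z)\sin 2v\bigr]^2,\quad \lambda=\sin(v-u),
\end{equation*}
with \emph{signed} terms, not the absolute values in your formula. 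Real Kraus operators are allowed to have negative entries, so nothing forces $\sin 2u$ and $\sin 2v$ to share a sign: along the curve $(1-r_z)\sin 2v=-(1+r_z)\sin 2u$ the right-hand side vanishes identically, and this one-parameter family sweeps the entire elliptical boundary. Concretely, the real channel
\begin{equation*}
K_1=\begin{pmatrix}\cos\frac{\theta-\nu}{2}&0\\ 0&\sin\frac{\theta+\nu}{2}\end{pmatrix},\qquad
K_2=\begin{pmatrix}0&-\cos\frac{\theta+\nu}{2}\\ \sin\frac{\theta-\nu}{2}&0\end{pmatrix},
\end{equation*}
with $\nu=\arctan(r_z\tan\theta)$ maps $(0,r_y,r_z)\mapsto\bigl(0,\sin\theta\,r_y,\sqrt{\cos^2\theta+r_z^2\sin^2\theta}\,\bigr)$, which saturates $1-s_z^2=\lambda^2(1-r_z^2)$ for every $\theta\in[0,\pi/2]$ and so reaches every $s_z\in[r_z,1]$ on the boundary; combined with your (correct) dephasing stage and the $s_z\leq r_z$ case, this completes sufficiency. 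This is precisely the construction in the paper's Lemma~\ref{lem:QubitTransformations-2}, written there for $y$-$z$-preserving maps acting on $x$-$z$-plane states: conjugating by $\diag(1,i)$ and dropping a global phase turns it into the real channel above. So your "main obstacle" is illusory --- the diagonal-plus-antidiagonal family you dismissed is exactly the solution --- and as the proposal stands, with the $O(4)$ dilation only promised rather than exhibited, the "if" direction of the theorem remains unproven.
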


This result was also announced in~\cite{PRLversion}, and a complete proof is presented in the following. For this, we will make use of methods developed earlier within the resource theory of quantum coherence~\cite{StreltsovPhysRevLett.119.140402,ChitambarPhysRevLett.117.030401,ChitambarPhysRevA.94.052336}. To use this analogy in an optimal way, we introduce a new set of operations, which we term \textit{$y$-$z$-preserving operations} and denote them by $\Lambda_{yz}$. They correspond to single-qubit quantum operations which map the $y$-$z$ plane of the Bloch space onto itself, i.e., if a state $\rho$ has a Bloch vector in the $y$-$z$ plane, then $\Lambda_{yz}[\rho]$ also has this property. In the same way, $x$-$z$-preserving operations map the set of real states onto itself. Similarly, $z$-preserving operations map diagonal states onto diagonal states, thus corresponding to maximally incoherent operations (MIO)~\cite{Aberg2006}. 

In the following we will prove two lemmas which will be useful for proving Theorem~\ref{thm:QubitTransitions}, and which also demonstrate a close relation between the resource theories of coherence and imaginarity.

\begin{lemma} \label{lem:QubitTransformations-1}
Let $\rho_{\mathrm{r}}$ and $\sigma_{\mathrm{r}}$ be qubit states
with Bloch vectors in the $x$-$z$ plane. If there exists a $y$-$z$-preserving operation $\Lambda_{yz}$ such
that $\Lambda_{yz}[\rho_\mathrm{r}]=\sigma_\mathrm{r}$,  there also exists a $z$-preserving operation $\Lambda_z$ such
that $\Lambda_z[\rho_\mathrm{r}]=\sigma_\mathrm{r}$.
\end{lemma}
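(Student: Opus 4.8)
The plan is to work in the affine Bloch-vector representation of single-qubit channels, where any quantum operation $\Lambda$ acts as $\boldsymbol{r}\mapsto M\boldsymbol{r}+\boldsymbol{t}$ for a real $3\times 3$ matrix $M$ and a translation $\boldsymbol{t}$, and to produce the desired $z$-preserving operation by symmetrizing $\Lambda_{yz}$ over complex conjugation. The idea is that the two operation classes are characterized by simple vanishing conditions on $(M,\boldsymbol{t})$, and that conjugation is exactly the symmetry needed to convert one set of conditions into the other without disturbing the action on real states.

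First I would translate each class into constraints on $(M,\boldsymbol{t})$. Demanding that the $y$-$z$ plane be mapped onto itself forces the output $x$-component to vanish on that plane, giving $M_{12}=M_{13}=0$ and $t_1=0$. Demanding that the $z$-axis be mapped onto itself (the defining property of a $z$-preserving operation) forces $M_{13}=M_{23}=0$ and $t_1=t_2=0$. Thus the hypothesis on $\Lambda_{yz}$ already supplies $M_{13}=0$ and $t_1=0$, and the only extra conditions I must manufacture are $M_{23}=0$ and $t_2=0$.

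The engine for this is the observation that complex conjugation $\rho\mapsto\rho^*$ acts on the Bloch sphere as the reflection $F=\mathrm{diag}(1,-1,1)$, since $\sigma_y^*=-\sigma_y$ while $\sigma_x,\sigma_z$ are real. I would define the conjugate channel $\bar\Lambda_{yz}[\rho]=(\Lambda_{yz}[\rho^*])^*$, which is CPTP (its Kraus operators are the complex conjugates $\{K_k^*\}$, satisfying $\sum_k (K_k^*)^\dagger K_k^*=(\sum_k K_k^\dagger K_k)^*=\id$), and set $\Lambda_z=\tfrac12(\Lambda_{yz}+\bar\Lambda_{yz})$, a convex combination and hence again a legitimate channel. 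On the Bloch level $\bar\Lambda_{yz}$ carries the matrix $FMF$ and translation $F\boldsymbol{t}$, so averaging kills precisely the entries with a single index equal to $2$ — namely $M_{12},M_{21},M_{23},M_{32}$ and $t_2$ — while leaving all others untouched. Combined with $M_{13}=0$ and $t_1=0$ inherited from $\Lambda_{yz}$, this yields exactly $M_{13}=M_{23}=0$ and $t_1=t_2=0$, so $\Lambda_z$ is $z$-preserving. Finally, since $\rho_\mathrm{r}$ and $\sigma_\mathrm{r}$ lie in the $x$-$z$ plane they are real, so $\rho_\mathrm{r}^*=\rho_\mathrm{r}$ and $\sigma_\mathrm{r}^*=\sigma_\mathrm{r}$; hence $\bar\Lambda_{yz}[\rho_\mathrm{r}]=(\Lambda_{yz}[\rho_\mathrm{r}])^*=\sigma_\mathrm{r}^*=\sigma_\mathrm{r}$ and therefore $\Lambda_z[\rho_\mathrm{r}]=\sigma_\mathrm{r}$, as required.

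The main obstacle is conceptual rather than computational: symmetrization over conjugation by itself removes only the terms coupling to the $y$-component and the $y$-translation, so starting from a generic channel the construction would fail. What makes it work is that the $y$-$z$-preserving hypothesis supplies exactly the two missing conditions $M_{13}=0$ and $t_1=0$, and the crux is matching these with the two conditions $M_{23}=0$, $t_2=0$ generated by the average. The one routine point to verify with care is complete positivity and trace preservation of the conjugate map, which follows immediately from its conjugated Kraus representation.
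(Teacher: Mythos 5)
Your proof is correct and is essentially the paper's own argument: both constructions symmetrize $\Lambda_{yz}$ with its conjugate channel (Kraus operators $K_j^*$), i.e.\ take $\tfrac12\left(\Lambda_{yz}+\bar\Lambda_{yz}\right)$, and both use that the real states $\rho_{\mathrm{r}},\sigma_{\mathrm{r}}$ are invariant under conjugation so the action on them is preserved. The only difference is presentational: you verify the $z$-preserving property through the affine Bloch representation $(M,\boldsymbol{t})$, whereas the paper checks directly that the averaged map sends $\ket{0}\!\bra{0}$ and $\ket{1}\!\bra{1}$ to diagonal states by cancelling the purely imaginary off-diagonal elements.
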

\begin{proof}
Since $\Lambda_{yz}$ is $y$-$z$
preserving, it converts both states $\ket{0}$ and $\ket{1}$ into
states $\mu_{0}$ and $\mu_{1}$ with Bloch vectors in the $y$-$z$
plane, i.e., $\mu_{0}$ and $\mu_{1}$ have purely imaginary off-diagonal
elements. This implies that any convex combination of $\ket{0}$ and
$\ket{1}$ is also converted into a state with purely imaginary off-diagonal
elements.

Let now $\{K_{j}\}$ be the Kraus operators of $\Lambda_{yz}$. We introduce
another transformation 
\begin{equation}
\Lambda'(\rho)=\sum_{j}L_{j}\rho L_{j}^{\dagger}
\end{equation}
 with Kraus operators $L_{j}=K_{j}^{*}$. It is straightforward to
verify that $\{L_{j}\}$ is indeed a valid set of Kraus operators:
\begin{equation}
\sum_{j}L_{j}^{\dagger}L_{j}=\sum_{j}\left(K_{j}^{*}\right)^{\dagger}K_{j}^{*}=\sum_{j}K_{j}^{T}K_{j}^{*}=\sum_{j}\left(K_{j}^{\dagger}K_{j}\right)^{*}=\id.
\end{equation}
Moreover, when applied to any state $\tau_{\mathrm{r}}$ in the
$x$-$z$ plane, we obtain 
\begin{equation}
\Lambda'(\tau_{\mathrm{r}})=\sum_{j}K_{j}^{*}\tau_{\mathrm{r}}K_{j}^{T}=\left(\sum_{j}K_{j}\tau_{\mathrm{r}}K_{j}^{\dagger}\right)^{*}=\left[\Lambda_{yz}(\tau_{\mathrm{r}})\right]^{T},
\end{equation}
where in the last step we used the fact that $\Lambda_{yz}(\tau_{\mathrm{r}})$
is Hermitian. It follows that \begin{subequations}
\begin{align}
\Lambda'[\rho_{\mathrm{r}}] & =\sigma_{\mathrm{r}},\\
\Lambda'[\ket{0}\!\bra{0}] & =\mu_{0}^{T},\\
\Lambda'[\ket{1}\!\bra{1}] & =\mu_{1}^{T}.
\end{align}
\end{subequations}

In the next step, we introduce the transformation 
\begin{equation}
\tilde{\Lambda}(\rho)=\frac{1}{2}\Lambda_{yz}(\rho)+\frac{1}{2}\Lambda'(\rho).
\end{equation}
Recalling that the states $\mu_{0}$ and $\mu_{1}$ have purely imaginary
off-diagonal elements, we further obtain 
\begin{subequations}
\begin{align}
\tilde{\Lambda}[\rho_{\mathrm{r}}] & =\sigma_{\mathrm{r}},\\
\tilde{\Lambda}[\ket{0}\!\bra{0}] & =\frac{1}{2}\left(\mu_{0}+\mu_{0}^{T}\right)=\sum_{j}\braket{j}{\mu_{0}|j}\ket{j}\!\bra{j},\\
\tilde{\Lambda}[\ket{1}\!\bra{1}] & =\frac{1}{2}\left(\mu_{1}+\mu_{1}^{T}\right)=\sum_{j}\braket{j}{\mu_{1}|j}\ket{j}\!\bra{j}.
\end{align}
\end{subequations}
This implies that $\tilde{\Lambda}$ is a $z$-preserving operation
transforming $\rho_{\mathrm{r}}$ onto $\sigma_{\mathrm{r}}$. 
\end{proof}

In the next step, we will use Lemma~\ref{lem:QubitTransformations-1} to characterize the set of real states achievable from a given real state $\rho$ via $y$-$z$-preserving operations.

\begin{lemma} \label{lem:QubitTransformations-2}
Let $\rho_\mathrm r$ and $\sigma_\mathrm r$ be qubit states in the $x$-$z$ plane of
the Bloch sphere. Then, there exists a $y$-$z$-preserving operation
such that $\sigma_\mathrm r = \Lambda_{yz}[\rho_\mathrm r]$ if and only if 
\begin{subequations}
\begin{align}
s_{x}^{2} & \leq r_{x}^{2},\label{eq:LemmaQubit-1}\\
\frac{1-s_{z}^{2}}{s_{x}^{2}} & \geq\frac{1-r_{z}^{2}}{r_{x}^{2}},\label{eq:LemmaQubit-2}
\end{align}
\end{subequations}
where $\boldsymbol{r}$ and $\boldsymbol{s}$ denote the Bloch vectors of $\rho_\mathrm r$ and $\sigma_\mathrm r$, respectively.
\end{lemma}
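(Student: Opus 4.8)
The plan is to reduce the claim to a known single-qubit statement in coherence theory and then translate it into Bloch coordinates. Since $\rho_\mathrm r$ and $\sigma_\mathrm r$ lie in the $x$-$z$ plane, their density matrices are real symmetric, with populations fixed by $r_z,s_z$ and real off-diagonal coherences fixed by $r_x,s_x$. With incoherent basis $\{\ket0,\ket1\}$ these are precisely states with real coherence, and $z$-preserving operations are exactly the maximally incoherent operations (MIO). I would therefore establish the chain: there exists a $y$-$z$-preserving $\Lambda_{yz}$ with $\Lambda_{yz}[\rho_\mathrm r]=\sigma_\mathrm r$ if and only if there exists an MIO $\Lambda_z$ with $\Lambda_z[\rho_\mathrm r]=\sigma_\mathrm r$, and the latter holds if and only if the two inequalities \eqref{eq:LemmaQubit-1}--\eqref{eq:LemmaQubit-2} hold.

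For the ``only if'' direction, given a $y$-$z$-preserving $\Lambda_{yz}$ taking $\rho_\mathrm r$ to $\sigma_\mathrm r$, Lemma~\ref{lem:QubitTransformations-1} immediately supplies a $z$-preserving (hence MIO) channel with the same action. I would then invoke the known characterization of qubit state conversion under MIO from the coherence literature. Concretely, inequality \eqref{eq:LemmaQubit-1} is monotonicity of the robustness of coherence, which for a qubit equals $|r_x|$, while \eqref{eq:LemmaQubit-2} corresponds to a second, independent qubit coherence monotone that for $x$-$z$-plane states is an increasing function of $r_x^2/(1-r_z^2)$; these two monotones are jointly complete for qubit MIO conversion and reproduce exactly \eqref{eq:LemmaQubit-1}--\eqref{eq:LemmaQubit-2}.

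For the ``if'' direction, once both inequalities hold the coherence characterization yields an MIO $\Lambda_z$ with $\Lambda_z[\rho_\mathrm r]=\sigma_\mathrm r$, and I would upgrade it to a $y$-$z$-preserving channel by the same symmetrization used in Lemma~\ref{lem:QubitTransformations-1}: setting $\tilde\Lambda=\tfrac12\Lambda_z+\tfrac12\Lambda_z'$ with $\Lambda_z'$ built from the complex-conjugated Kraus operators, one still has $\tilde\Lambda[\rho_\mathrm r]=\sigma_\mathrm r$ because $\rho_\mathrm r,\sigma_\mathrm r$ are real, and $\tilde\Lambda$ sends diagonal states to diagonal states and $\sigma_y$ to a multiple of $\sigma_y$, so it preserves the $y$-$z$ plane. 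Alternatively one can write explicit qubit Kraus operators realizing the extremal transformation directly. The step I expect to be most delicate is identifying the second coherence monotone and verifying that the pair of coherence conditions translates precisely into the Bloch inequalities \eqref{eq:LemmaQubit-1}--\eqref{eq:LemmaQubit-2} --- the first inequality is immediate but the second is not --- so the crux is establishing necessity and sufficiency of exactly these two inequalities for qubit MIO conversion with real coherence.
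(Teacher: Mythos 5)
Your proof is correct, and its necessity half coincides with the paper's: both invoke Lemma~\ref{lem:QubitTransformations-1} to turn a hypothetical $y$-$z$-preserving conversion into a $z$-preserving (MIO) one, and then cite the known qubit coherence-conversion results to rule that out whenever \eqref{eq:LemmaQubit-1} or \eqref{eq:LemmaQubit-2} fails. Where you genuinely diverge is the sufficiency half. The paper proves achievability by explicit construction: after restricting to the positive part of the Bloch space (using $\sigma_z$ and rotations about the $x$-axis, which are $y$-$z$-preserving), it exhibits for $s_z>r_z$ a one-parameter family of real Kraus pairs, $K_1=\mathrm{diag}(a_1,b_1)$ together with an antidiagonal $K_2$, with $\nu=\arctan[r_z\tan\theta]$, saturating \eqref{eq:LemmaQubit-2}, and for $s_z\le r_z$ the mixing $\Lambda[\rho]=(1-p)\rho+p\,\sigma_x\rho\sigma_x$ saturating \eqref{eq:LemmaQubit-1}; together these trace out the boundary of the accessible region. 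You instead import the achievability direction of the qubit incoherent-conversion theorem (any IO channel is in particular MIO, i.e.\ $z$-preserving) and then upgrade the resulting $\Lambda_z$ to a $y$-$z$-preserving channel via the conjugate-Kraus symmetrization $\tilde\Lambda=\tfrac12\Lambda_z+\tfrac12\Lambda_z'$ --- precisely the converse of the trick used in Lemma~\ref{lem:QubitTransformations-1} --- and your verification goes through: $\tilde\Lambda[\rho_\mathrm{r}]=\sigma_\mathrm{r}$ because $\Lambda_z'[\rho_\mathrm{r}]=\left(\Lambda_z[\rho_\mathrm{r}]\right)^*=\sigma_\mathrm{r}$ for real input and output, $\tilde\Lambda$ maps diagonal states to diagonal states, and it sends $\sigma_y$ to $a_y\sigma_y$ since complex conjugation flips only the $\sigma_y$ component of the Hermitian operator $\Lambda_z[\sigma_y]$. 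Your route is shorter and makes the coherence--imaginarity dictionary fully symmetric (conversions of real states under $y$-$z$-preserving maps and under MIO are literally equivalent), but it leans on the coherence literature for both directions of the qubit theorem; the paper needs only the impossibility direction from the literature and, in exchange, produces explicit optimal Kraus operators, which is what allows the authors to plot the accessible regions in Fig.~\ref{fig:QubitTransitions} and to implement these operations in the optical setting discussed later.
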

\begin{proof}
We will first prove that a $y$-$z$-preserving operation violating
Eq.~(\ref{eq:LemmaQubit-1}) and/or Eq.~\eqref{eq:LemmaQubit-2} does not
exist. Assume -- by contradiction -- that there exists a $y$-$z$-preserving operation violating Eq.~(\ref{eq:LemmaQubit-1}) and/or Eq.~\eqref{eq:LemmaQubit-2}. Then, by Lemma~\ref{lem:QubitTransformations-1} there must also exist a $z$-preserving (i.e. MIO) operation such that $\sigma_\mathrm r = \Lambda_\mathrm{MIO}[\rho_\mathrm r]$. Such a transformation does not exist due to results in \cite{StreltsovPhysRevLett.119.140402,ChitambarPhysRevLett.117.030401,ChitambarPhysRevA.94.052336}.

We will now show that a $y$-$z$-preserving operation exists if Eqs.~\eqref{eq:LemmaQubit-1} and \eqref{eq:LemmaQubit-2} are fulfilled. Note that $\sigma_z$ and any rotation around the $x$-axis are $y$-$z$-preserving operations. Thus, we can restrict ourselves to the positive part of the Bloch space, i.e., all Bloch coordinates considered in the following are non-negative. Moreover, we are interested in the boundary of the achievable region, characterized by the maximal $s_x$ for a given $s_z$. 

If $s_{z}>r_{z}$, Eq.~\eqref{eq:LemmaQubit-2} guarantees that Eq.~\eqref{eq:LemmaQubit-1} is satisfied. A $y$-$z$-preserving
operation fulfilling Eq.~\eqref{eq:LemmaQubit-2} with equality
is given by the Kraus operators
\begin{equation}
K_{1}=\left(\begin{array}{cc}
a_{1} & 0\\
0 & b_{1}
\end{array}\right),\,\,\,\,\,K_{2}=\left(\begin{array}{cc}
0 & b_{2}\\
a_{2} & 0
\end{array}\right),
\end{equation}
where the parameters $a_{i}$ and $b_{i}$ are chosen as \begin{subequations}
\begin{align}
a_{1} & =\cos\frac{\theta-\nu}{2},\,\,\,\,a_{2}=\sin\frac{\theta-\nu}{2},\\
b_{1} & =\sin\frac{\theta+\nu}{2},\,\,\,\,b_{2}=\cos\frac{\theta+\nu}{2},
\end{align}
\end{subequations}with $\nu=\arctan[r_{z}\tan\theta]$ and parameter
$\theta$ is in the range $[0,\frac{\pi}{2}]$. By varying $\theta$ it is possible to attain any value for $s_z$ in the range $[r_z,1]$. This proves that for $s_{z}>r_{z}$ the boundary of the achievable region is characterized by Eq.~(\ref{eq:LemmaQubit-2}).

For $s_z \leq r_z$ Eq.~(\ref{eq:LemmaQubit-1}) ensures that Eq.~(\ref{eq:LemmaQubit-2}) is fulfilled. The boundary of the achievable region is then obtained by the $y$-$z$-preserving operation
\begin{equation}
    \Lambda[\rho]= (1-p) \rho + p \sigma_x \rho \sigma_x
\end{equation}
with $p$ in the range $[0,1/2]$. This proves that for $s_z \leq r_z$ the boundary of the achievable region is determined by Eq.~(\ref{eq:LemmaQubit-1}).
\end{proof}

Equipped with these results, we are now ready to prove Theorem~\ref{thm:QubitTransitions}. Since rotations around the $y$-axis correspond to real unitaries, we can without loss of generality assume that the initial and the final state have Bloch vectors in the $y$-$z$ plane. It is thus enough to prove the statement for
\begin{subequations}
\begin{align}
    s_y^2 &\leq r_y^2 \\
    \frac{1-s_{z}^{2}}{s_{y}^{2}} &\geq\frac{1-r_{z}^{2}}{r_{y}^{2}}.
\end{align}
\end{subequations}
The proof of the theorem now directly follows from Lemma~\ref{lem:QubitTransformations-2} by symmetry, exchanging the $x$ and $y$ directions.

In Fig.~\ref{fig:QubitTransitions} we show the $y$-$z$-projection of the accessible region for three different initial states. The complete region can be obtained by rotation around the $y$-axis.

\begin{figure}
	\centering
	\includegraphics[width=0.9\columnwidth]{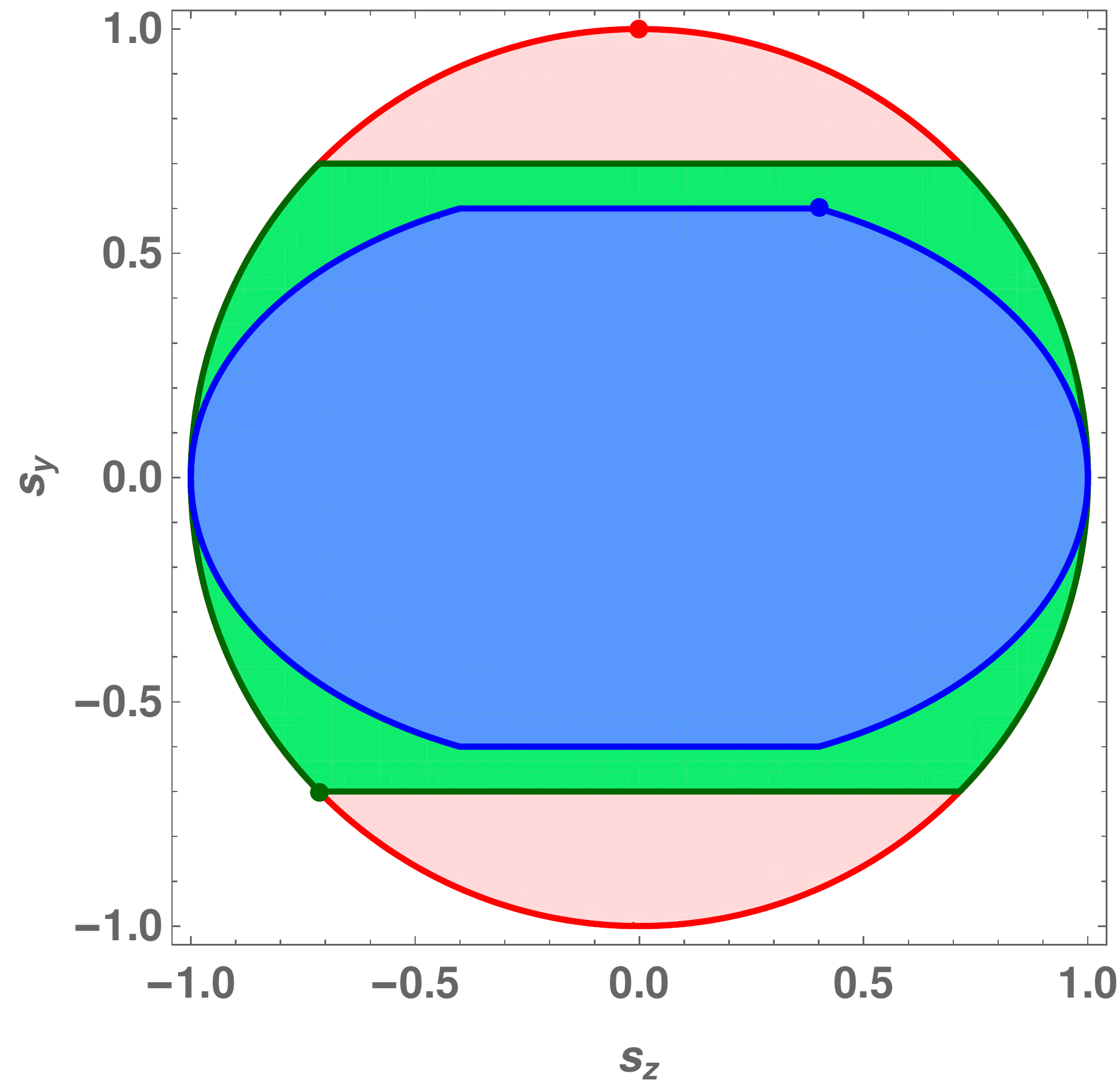}
	\caption{\label{fig:QubitTransitions} \textbf{State transformation via real operations for qubit systems.} The plot shows the $y$-$z$ projections of accessible states for initial qubit states with Bloch vectors $(0,0.6,0.4)$ [blue dot], $(0,-0.7,-\sqrt{0.51})$ [green dot], and $(0,1,0)$ [red dot]. Note that the second and third states are pure. The corresponding accessible area in the $y$-$z$ plane is shown in blue, green, and red, respectively. The full accessible area is obtained by rotation around the $y$-axis.
	}
\end{figure}

\subsection{Approximate imaginarity distillation}

As discussed in Section~\ref{sec:ResourceTheories}, it is always possible to perform approximate state transformations, even if neither deterministic nor stochastic conversion is possible. For an initial state $\rho$ we are then interested in the maximal fidelity between $\Lambda[\rho]$ and the target state $\sigma$, maximized over all real operations $\Lambda$:
\begin{equation}
F(\rho\rightarrow\sigma)=\max_{\Lambda}\left\{ F(\Lambda[\rho],\sigma) \right\},
\end{equation}
where the fidelity $F(\rho,\sigma)$ is defined in Eq.~(\ref{eq:Fidelity}). If the target state is the maximally imaginary state $\ket{\MI}$ the corresponding quantity is called \emph{fidelity of imaginarity}~\cite{PRLversion}:
\begin{equation}
    F_\mathrm{I}(\rho) = F\left(\rho \rightarrow \ket{\MI}\!\bra{\MI}\right).
\end{equation}
The following theorem gives a closed expression for the fidelity of imaginarity for any quantum state. This result was announced in~\cite{PRLversion}, and a complete proof is presented below.

\begin{theorem} \label{thm:FI}
For any quantum state $\rho$ the fidelity of imaginarity is given as 
\begin{equation}\label{maxfid}
    F_\mathrm{I}(\rho)=\frac{1+\I_R(\rho)}{2} = \frac{1}{2} + \frac{1}{4}||\rho - \rho^T||_1,
\end{equation}
where $T$ denotes transposition and $||M||_1=\tr \sqrt{M^\dagger M}$ is the trace norm.
\end{theorem}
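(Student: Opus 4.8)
The plan is to first reduce the optimization to a linear functional. Since the target $\ket{\MI}\!\bra{\MI}$ is pure, the fidelity collapses to $F(\Lambda[\rho],\ket{\MI}\!\bra{\MI})=\braket{\MI}{\Lambda[\rho]|\MI}$, so that $F_{\mathrm I}(\rho)=\max_\Lambda \braket{\MI}{\Lambda[\rho]|\MI}$, the maximum taken over real operations $\Lambda$. I would then prove the two inequalities separately: the upper bound $F_{\mathrm I}(\rho)\leq \tfrac12(1+\I_R(\rho))$ using a robustness witness together with Proposition~\ref{prop:RSFidelity}, and a matching lower bound by exhibiting an explicit real operation.

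For the upper bound I would exploit the optimal decomposition underlying the robustness. Writing $s=\I_R(\rho)$ and letting $\tau^*$ achieve the minimum in Eq.~\eqref{eq:Robustness}, the state $\rho_{\mathrm r}=(\rho+s\tau^*)/(1+s)$ is real, hence $\rho=(1+s)\rho_{\mathrm r}-s\tau^*$. By linearity, $\Lambda[\rho]=(1+s)\Lambda[\rho_{\mathrm r}]-s\Lambda[\tau^*]$, so that $\braket{\MI}{\Lambda[\rho]|\MI}=(1+s)\braket{\MI}{\Lambda[\rho_{\mathrm r}]|\MI}-s\braket{\MI}{\Lambda[\tau^*]|\MI}$. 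Because a real operation maps real states to real states, $\Lambda[\rho_{\mathrm r}]\in\RS$, whence $\braket{\MI}{\Lambda[\rho_{\mathrm r}]|\MI}\leq \tfrac12$ by Proposition~\ref{prop:RSFidelity}, while $\braket{\MI}{\Lambda[\tau^*]|\MI}\geq 0$ since $\Lambda[\tau^*]$ is positive semidefinite. Combining these gives $\braket{\MI}{\Lambda[\rho]|\MI}\leq \tfrac12(1+s)$ for every real $\Lambda$, i.e.\ $F_{\mathrm I}(\rho)\leq \tfrac12(1+\I_R(\rho))$.

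For the lower bound I would construct an explicit saturating operation from the canonical form of $\mathrm{Im}\,\rho$. Since $F_{\mathrm I}$ is invariant under real orthogonal conjugation (as in Eq.~\eqref{eq:MeasuresOrthogonal}, because $O\,\cdot\,O^T$ is itself a free operation), I may assume $\mathrm{Im}\,\rho$ is block-diagonal as in Eq.~\eqref{eq:imaginarySpec}, with blocks $\lambda_k\bigl(\begin{smallmatrix}0&1\\-1&0\end{smallmatrix}\bigr)$ on the pairs $(\ket{e_{2k-1}},\ket{e_{2k}})$ for $k=1,\dots,r$ and zeros on $\ket{e_{2r+1}},\dots,\ket{e_d}$. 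I then take the real Kraus operators $V_k=\ket{0}\!\bra{e_{2k}}+\ket{1}\!\bra{e_{2k-1}}$ and $W_l=\ket{0}\!\bra{e_l}$ for $l>2r$, which satisfy $\sum_k V_k^\dagger V_k+\sum_l W_l^\dagger W_l=\id$ and route the \emph{entire} state into the output qubit $\mathrm{span}\{\ket0,\ket1\}$. For the resulting $\sigma=\Lambda[\rho]$ one gets $\sigma_{00}+\sigma_{11}=\tr\rho=1$, while the deliberate swap $e_{2k-1}\leftrightarrow e_{2k}$ inside each block makes the imaginary off-diagonals add constructively, yielding $\tfrac{i}{2}(\sigma_{01}-\sigma_{10})=\sum_k\lambda_k$. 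Hence $\braket{\MI}{\sigma|\MI}=\tfrac12+\sum_k\lambda_k=\tfrac12+\tfrac12\I_R(\rho)$, using $\I_R(\rho)=\|\mathrm{Im}\,\rho\|_1=2\sum_k\lambda_k$ from Proposition~\ref{prop:Robustness}. This meets the upper bound and, together with $\I_R(\rho)=\tfrac12\|\rho-\rho^T\|_1$, finishes the proof.

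The main obstacle is the lower-bound construction rather than the upper bound. A naive operation that merely projects onto the support of $\mathrm{Im}\,\rho$ extracts the full off-diagonal weight $\sum_k\lambda_k$ but leaves only $\tfrac12\tr[\rho P_{\mathrm{supp}}]\leq\tfrac12$ on the diagonal, falling short whenever $\rho$ has weight outside $\mathrm{supp}(\mathrm{Im}\,\rho)$. The key point is that channelling the kernel component into $\ket0$ as well saturates the diagonal term at $\tfrac12$ without disturbing the imaginary contribution; one must also check the sign so that the accumulated off-diagonal aligns with $\ket{\MI}$ (if it aligns with $\ket{\MIi}$, swapping the two basis vectors inside every block fixes it). Everything else is a routine verification of completeness and of the matrix entries of $\sigma$.
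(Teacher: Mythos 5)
Your proposal is correct and follows essentially the same route as the paper: the upper bound via the robustness decomposition $\rho=(1+s)\rho_{\mathrm r}-s\tau^*$ combined with Proposition~\ref{prop:RSFidelity}, and achievability via a real orthogonal rotation bringing $\mathrm{Im}\,\rho$ to the block form of Eq.~\eqref{eq:imaginarySpec} followed by real Kraus operators that compress each block (with a swap) onto a single output qubit. Your Kraus operators $V_k,W_l$ differ from the paper's $K_m$ only in that you route the kernel vectors of $\mathrm{Im}\,\rho$ to $\ket{0}$ individually rather than in pairs, which is a cosmetic difference (and in fact handles the alignment of blocks with pairs slightly more cleanly than the paper's indexing).
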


\begin{proof}

From the definition of robustness of imaginarity $\I_\mathrm{R}$ (see Section~\ref{Sec:Robustness}), we can write $\rho$ as
\begin{equation}
\rho=[1+\I_R(\rho)]\delta-\I_R(\rho)\tau,
\end{equation}
with some quantum state $\tau$ and a real state $\delta$. By applying a real operation $\Lambda$ on both sides we obtain
\begin{equation}
\bra{\hat{+}}\Lambda(\rho)\ket{\hat{+}}=[1+\I_R(\rho)]\bra{\hat{+}}\Lambda(\delta)\ket{\hat{+}}-\I_R(\rho)\bra{\hat{+}}\Lambda(\tau)\ket{\hat{+}}.
\end{equation}
Since $\Lambda$ is a real operation, we have $\Lambda(\delta)\in\RS$. Applying Proposition~\ref{prop:RSFidelity} we obtain
\begin{equation}
\bra{\hat{+}}\Lambda(\delta)\ket{\hat{+}}\leq\frac{1}{2},
\end{equation}
which proves the bound
\begin{equation}
\bra{\hat{+}}\Lambda(\rho)\ket{\hat{+}}\leq \frac{1}{2}[1+\I_R(\rho)]. \label{eq:RobustnessBound}
\end{equation}

We will now show that this bound is achievable by a real operation $\Lambda$. If the dimension is even, we define $\Lambda$ via the real Kraus operators
\begin{equation}
    K_m=\ketbra{1}{2m}+\ketbra{0}{2m+1},\, m=0,1,\ldots, d/2-1.
\end{equation}
For odd dimension, the Kraus operators $K_m$ are defined in the same way for $m\leq \lfloor d/2 \rfloor-1$, and we further define 
\begin{equation}
    K_{\lfloor d/2 \rfloor} = \ket{0}\!\bra{d-1}.
\end{equation}

Let now $O$ be a real orthogonal matrix such that $O \imag(\rho) O^T$ is block-diagonal as in Eq.~(\ref{eq:imaginarySpec}). We see that $\Lambda[O\real (\rho)O^T]$ is a real single-qubit state, which by Proposition~\ref{prop:RSFidelity} implies that 
\begin{equation}
    \braket{\MI}{\Lambda[O\real(\rho)O^T]|\MI}=\frac{1}{2}.
\end{equation}
Moreover, we have 
\begin{equation}
\Lambda[O\mathrm{Im}(\rho)O^T]=\left(\sum_{m=0}^{\lfloor d/2 \rfloor-1}\lambda_m\right)\left(\ketbra{1}{0}-\ketbra{0}{1}\right).
\end{equation}
The fidelity of the final state with the maximally imaginary state can now be evaluated as follows:
\begin{align}
\braket{\MI}{\Lambda[O\rho O^T]|\MI} &= \braket{\MI}{\Lambda[O\real(\rho) O^T]|\MI} \label{eq:FIproof} \\
&+ i \braket{\MI}{\Lambda[O\imag(\rho) O^T]|\MI} \nonumber \\
&= \frac{1}{2}(1+2\sum_m\lambda_m). \nonumber
\end{align}
To complete the proof, note that the robustness of imaginarity can be expressed as (see Section~\ref{Sec:Robustness})
\begin{equation}
    \I_R(\rho) = \frac{1}{2}||\rho-\rho^T||_1 = ||\imag(\rho)||_1 = 2\sum_m\lambda_m.
\end{equation}
Using this result in Eq.~\eqref{eq:FIproof}, we obtain
\begin{equation}
    \braket{\MI}{\Lambda[O\rho O^T]|\MI} = \frac{1}{2}[1+\I_R(\rho)].
\end{equation}

In summary, we proved that the fidelity $\bra{\hat{+}}\Lambda(\rho)\ket{\hat{+}}$ is upper-bounded by Eq.~\eqref{eq:RobustnessBound}, and that this upper bound is achievable for any state $\rho$ with a suitably chosen real operation $\Lambda$. This completes the proof of the theorem.
\end{proof}

The relation between the robustness measure and conversion fidelity -- as given in Theorem~\ref{thm:FI} -- can also be extended to a general class of resource theories~\cite{RegulaPhysRevA.101.062315}.

\section{Imaginarity as a resource in optical experiments \label{sec:OpticalExperiments}}

In this section, we show that imaginarity can be regarded as a resource in linear optical experiments. We focus on optical setups with the following assumptions: (1) The quantum information is encoded in polarization or path degrees of freedom. (2) The optical elements we can use are limited to standard linear optical elements, i.e., half(quarter) wave plates and beamsplitters.

Under above assumptions, real operations can be implemented more economically, compared to general quantum operations. We begin with a simple observation, that implementing a general unitary on photon polarization requires to control at least $3$ wave plates (this is due to a qubit unitary being specified by $3$ parameters), whereas only one half-wave plate is needed if the unitary has only real components, e.g., rotations with respect to the $y$-axis. When restricting the optical elements to half (quarter)-wave plates, a rotation about the $z$-axis needs two additional quarter-wave plates compared to a rotation about the $y$-axis. This observation is the first evidence that the set of real operations is potentially easier to implement in terms of the number of optical elements, compared to the set of complex quantum operations. 

We then consider single-qubit measurement with $n$ outcomes. As illustrated in Fig.~\ref{fig:QubitMeasurement}, any such measurement can be implemented with $8n-5$ unset wave plates. If $n=2$, we have two Kraus operators $K_0$ and $K_1$ with
\begin{equation}
    K_1^\dagger K_1 = \id - K_0^\dagger K_0. \label{eq:QubitMeasurement}
\end{equation}
By singular value decomposition, there are unitaries $U_i$ and $V_i$ such that $K_j = U_j S_j V_j$, and $S_j$ are diagonal matrices with nonnegative entries. By Eq.~\eqref{eq:QubitMeasurement} we obtain 
\begin{equation}
    V_1^\dagger S_1^2 V_1 = V_0^\dagger (\id -  S_0^2) V_0,
\end{equation}
which implies that $V_0 = V_1$ and $S_1 = (\id - S_0^2)^{1/2}$. In summary, a general two-outcome measurement can be performed by applying a unitary $V_0$, followed by a two-outcome measurement with diagonal Kraus operators $S_0$ and $S_1$, and -- depending on the measurement outcome -- completed by a conditional unitary $U_0$ or $U_1$. A setup realizing this procedure on photon polarization is shown in Fig.~\ref{fig:QubitMeasurement}. \begin{figure}
	\centering
	\includegraphics[scale=0.07]{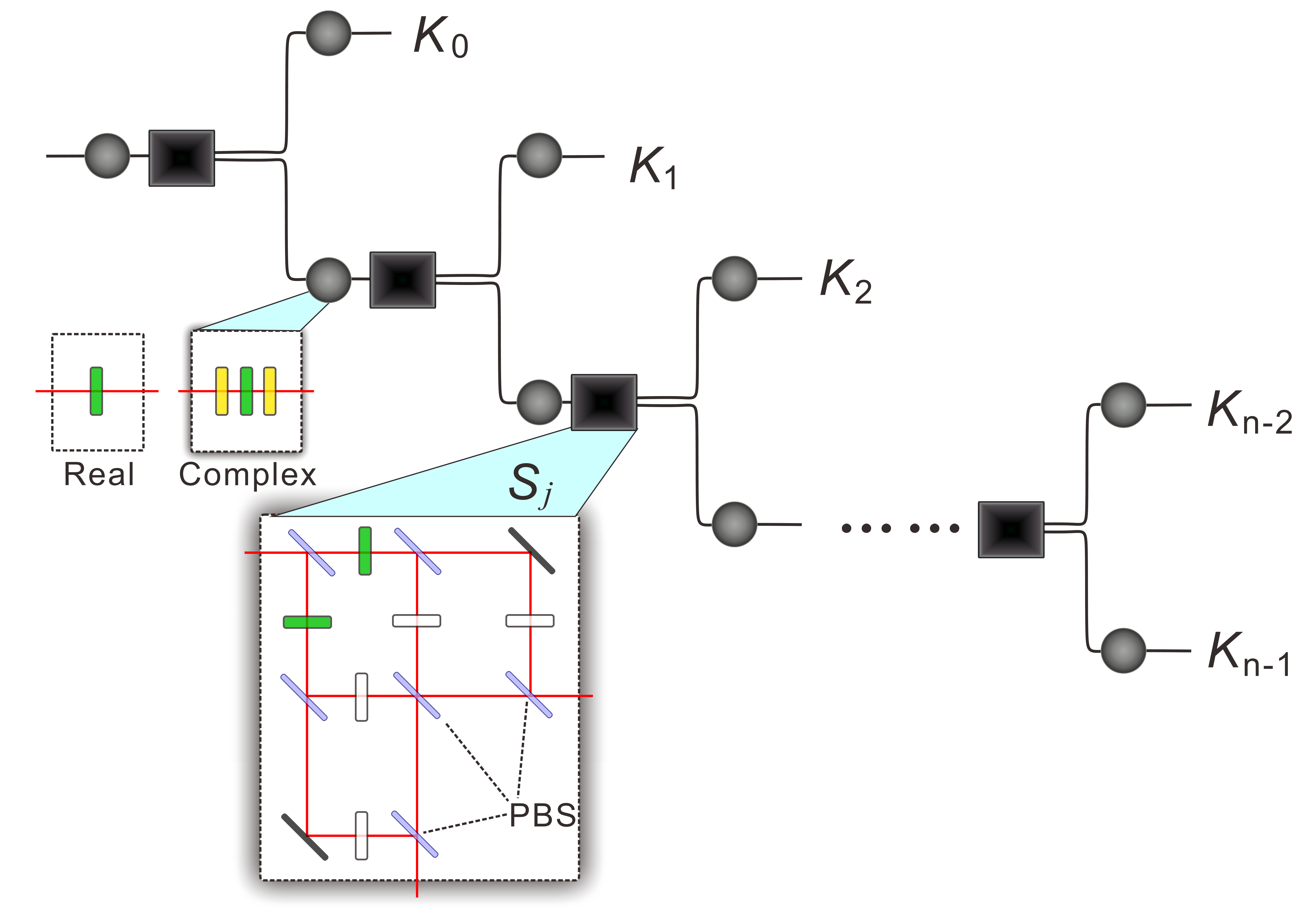}
	\caption{\label{fig:QubitMeasurement}
		\textbf{Linear optical implementation of real and general qubit operations with polarized photons.} Green and yellow plates represent unset wave plates (WP), while gray ones represent fixed wave plates that we don't need to control.  Thin strips denote beamsplitters which can separate the horizontally polarized photons from vertically polarized ones. Round nodes represent orthogonal or unitary operations, and boxes represent the operations $S_j$, which have two outcomes. For general real operations we only need to control half-wave plates, while for complex ones we have to add quarter-wave plates for manipulating the imaginary part of the photonic states.
	}
\end{figure}
The unitaries $V_0$, $U_0$, and $U_1$ on the polarization-encoded qubit can be realized by $3$ wave plates per unitary, while the measurement with diagonal Kraus operators $\{S_0,S_1\}$ can be realized with $3$ beam displacers and $5$ wave plates, of which 2 are unset. This amounts to $11$ unset wave plates in total. By using the same procedure repeatedly, this setup can be extended to $n$ Kraus operators, see also~\cite{AhnertPhysRevA.71.012330}. For each additional Kraus operator we need $8$ unset wave plates, giving $8n-5$ unset wave plates in total, as claimed. 

If all Kraus operators are real, fewer wave plates are needed. This can be seen from the fact that the singular value decomposition of each $K_j$ can be done with real $U_j$ and $V_j$. Thus, a real measurement with two outcomes can be implemented with $5$ unset wave plates, and each additional real Kraus operator requires $4$ additional wave plates, see also Fig.~\ref{fig:QubitMeasurement}. The number $4n-3$ is optimal, since it corresponds to the number of independent real parameters for $n$ real Kraus operators. Compared to $8n-5$ unset wave plates for a general $n$-outcome measurement via the method presented above, in the limit $n\rightarrow \infty$ we can save approximately half of the optical elements if we restrict ourselves to real measurements.

We will now go one step further and consider implementation of quantum operations of arbitrary dimension.
Note that every real operation acting on a system of dimension $d$ has a real dilation~\cite{Hickey+Gour.JPA.2018}:
\begin{equation}
\Lambda^A_\mathrm{RO}[\rho^A]=\tr_B \left[O_{AB}\left(\rho^A \otimes \ket{0}\!\bra{0}^B\right)O_{AB}^T \right], \label{eq:Dilation}
\end{equation}
where $O_{AB}$ is a $d^3 \times d^3$ real orthogonal matrix. Correspondingly, a general quantum operation admits a dilation with a general $d^3 \times d^3$ unitary matrix. Implementing an $m\times m$ unitary in path degree requires at least $m^2-1$ unset wave plates, corresponding to the number of real parameters of the unitary. On the other hand, an $m \times m$ real orthogonal matrix can be decomposed into $(m^2-m)/2$ real orthogonal matrices, each acting on two levels. This can be proven in the same way as for unitary matrices, see e.g.~\cite[p. 189]{NielsenChuang2011}. There, an explicit construction is presented for decomposing an $m\times m$ unitary matrix $U$ into $(m^2-m)/2$ two-level unitaries. If $U$ is additionally real, all two-level unitaries constructed in the proof are also real. Since a real orthogonal two-level matrix can be implemented with a single wave plate, any real orthogonal $m \times m$ matrix can be implemented by using $(m^2-m)/2$ unset wave plates. Thus, implementing a real operation can be achieved with $(d^6-d^3)/2$ unset wave plates. Instead, implementing a general quantum operation in the same way requires at least $d^6-1$ unset wave plates. For large $d$, restricting ourselves to real operations reduces the number of unset wave plates by $1/2$, when compared to the number of wave plates for a general quantum operations implemented via a unitary dilation. 

In summary, our results show that restricting ourselves to real operations in optical experiments allows to reduce the number of unfixed wave plates by $1/2$, in the limit of large system dimension. Similar results are found for single-qubit measurements with $n$ outcomes:  in the limit $n \rightarrow \infty$ restricting ourselves to real qubit measurements allows to reduce the number of unfixed wave plates by $1/2$. These results equip the resource theory of imaginarity with an operational meaning in optical experiments.

\section{Imaginarity in local state discrimination \label{sec:StateDiscrimination}}

We will now discuss the role of imaginarity for discrimination of quantum states. For two mixed states $\rho^{AB}_1$ and $\rho^{AB}_2$ to be perfectly distinguishable via LOCC, there must exist a POVM with elements $\{M_1, M_2\}$ of the form\footnote{Additionally to Eq.~(\ref{eq:SeparablePOVM}), the POVM should be implementable as LOCC.}
\begin{equation} \label{eq:SeparablePOVM}
    M_j = \sum_k A_{j,k} \otimes B_{j,k}
\end{equation}
with Hermitian $A_{j,k}$ and $B_{j,k}$, and moreover 
\begin{subequations} \label{eq:StateDiscriminationCondition}
\begin{align}
    \tr \left[M_1\rho^{AB}_1\right] &= \tr \left[M_2\rho^{AB}_2\right] = 1, \\
    \tr \left[M_1\rho^{AB}_2\right] &= \tr \left[M_2\rho^{AB}_1\right] = 0.
\end{align}
\end{subequations}
As was shown in~\cite{walgate2000local} such perfect discrimination is indeed possible if $\rho_1^{AB}$ and $\rho_2^{AB}$ are pure and orthogonal.

Here, we will consider state discrimination via \emph{local real operations and
classical communication} (LRCC)~\cite{PRLversion}. The set LRCC is defined in the same way as LOCC, but with the constraint that both Alice and Bob perform only real operations locally. Then, for two states $\rho^{AB}_1$ and $\rho^{AB}_2$ to be perfectly distinguishable via LRCC there must exist a POVM fulfilling Eqs.~(\ref{eq:StateDiscriminationCondition}), with POVM elements of the form~(\ref{eq:SeparablePOVM}) and real symmetric $A_{j,k}$ and $B_{j,k}$. 

As we will see in the following, perfect discrimination via LRCC is possible for any pair of pure orthogonal real states.

\begin{proposition}
Two real orthogonal pure quantum states $\ket{\psi}^{AB}$ and $\ket{\phi}^{AB}$ can be perfectly distinguished via local real operations and classical communication.
\end{proposition}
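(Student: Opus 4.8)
The plan is to adapt the construction of Walgate \etal~\cite{walgate2000local} to the real setting, ensuring that every local operation used by Alice and Bob, as well as the structure of the protocol, stays within LRCC. First I would expand both states along Alice's system in the real reference basis, writing $\ket{\psi}^{AB} = \sum_i \ket{i}^A \ket{\alpha_i}^B$ and $\ket{\phi}^{AB} = \sum_i \ket{i}^A \ket{\beta_i}^B$. Since both global states are real, all the conditional vectors $\ket{\alpha_i}$ and $\ket{\beta_i}$ are real, so the matrix $G$ with entries $G_{ij} = \braket{\alpha_i}{\beta_j}$ is a real matrix, and orthogonality of the two states translates into $\tr G = \braket{\psi}{\phi} = 0$.

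The central idea is to rotate Alice's basis by a real orthogonal matrix $O$ so that, conditioned on each of Alice's measurement outcomes, Bob is left with two \emph{orthogonal} states. If Alice applies $O$ and then measures in the computational basis, the conditional Bob vectors become $\ket{\tilde\alpha_k} = \sum_i O_{ik}\ket{\alpha_i}$ and $\ket{\tilde\beta_k} = \sum_i O_{ik}\ket{\beta_i}$, and since everything is real a direct computation gives $\braket{\tilde\alpha_k}{\tilde\beta_k} = (O^T G\, O)_{kk}$. The goal is therefore to make the diagonal of $O^T G\, O$ vanish. Because that diagonal is a quadratic form in the columns of $O$, it is unchanged if $G$ is replaced by its symmetric part $G_\mathrm{sym} = (G + G^T)/2$, a real symmetric matrix still satisfying $\tr G_\mathrm{sym} = 0$.

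The main technical step is then the claim that any real symmetric matrix with zero trace is orthogonally similar to a matrix with vanishing diagonal; this is the real-symmetric analogue of the zero-trace-to-zero-diagonal fact used in the complex case, and it is the place where the restriction to real orthogonal (rather than general unitary) transformations must be handled carefully. I would prove it by induction on the dimension: if some diagonal entry is nonzero, the zero-trace condition forces the existence of two diagonal entries of opposite sign, and a planar rotation in the corresponding two-dimensional subspace moves one of these entries continuously between the two values, hence through zero by the intermediate value theorem. Fixing the coordinate that has been zeroed out and recursing on the complementary block---whose diagonal still sums to zero---completes the argument. This existence statement is the crux of the proof and the step I expect to require the most care.

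With such an $O$ in hand the protocol is immediate. Alice performs the real orthogonal operation $O$, measures in the real computational basis $\{\ketbra{k}{k}\}$, and communicates her outcome $k$ to Bob; both steps are real operations. Conditioned on $k$, Bob holds one of the two orthogonal real vectors $\ket{\tilde\alpha_k}$ or $\ket{\tilde\beta_k}$, which he discriminates perfectly with the real projective measurement $\{\ketbra{\tilde\alpha_k}{\tilde\alpha_k},\ \id - \ketbra{\tilde\alpha_k}{\tilde\alpha_k}\}$, whose elements are real symmetric because they are built from real vectors. Since every local operation together with the classical communication is real, the entire discrimination procedure lies within LRCC, which proves the claim.
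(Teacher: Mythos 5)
Your proof is correct, and it follows the same Walgate-style protocol as the paper's: expand both states over Alice's basis, form the real trace-zero correlation matrix, rotate Alice's side by a real orthogonal $O$ so that Bob's conditional states become orthogonal, and let Bob finish with a real projective measurement and classical communication. The difference lies entirely in how the key lemma---the existence of $O$ making the diagonal of $O^T G O$ vanish---is established. The paper pads Alice's dimension to a power of two and runs a dyadic equalization scheme: a two-level real rotation can always make two diagonal entries equal, so it equalizes pairs, then pairs of pairs, and so on, until all diagonal entries are equal and hence zero by the trace condition. You instead observe that the diagonal of $O^T G O$ depends only on the symmetric part $G_{\mathrm{sym}}=(G+G^T)/2$, and prove by induction that a trace-zero real symmetric matrix is orthogonally similar to one with vanishing diagonal: while some diagonal entry is nonzero there must be two of opposite sign, a planar rotation sweeps the corresponding entry continuously between them, so the intermediate value theorem zeroes it, and one recurses on the complementary block, which is still symmetric with zero trace. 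Your route avoids the power-of-two padding, works in any dimension directly, and uses at most $d_A-1$ planar rotations rather than roughly $d_A\log_2 d_A$; the paper's scheme is slightly more explicit (each rotation equalizes a designated pair of entries) and never needs the symmetrization step, since its equalization argument applies to the possibly non-symmetric matrix $C$ as it stands. Two small points to tidy in your write-up: Bob's conditional vectors are unnormalized, so his projectors should be built from the normalized vectors (with the degenerate case of a vanishing conditional vector handled trivially), and your index convention effectively has Alice apply $O^T$ rather than $O$, which is immaterial since $O$ ranges over all real orthogonal matrices.
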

\begin{proof}
The proof follows similar lines of reasoning as in Ref.~\cite{walgate2000local}. Any two real pure states can be expanded as
\begin{subequations}
\begin{align}
    & \ket{\psi}^{AB}=\sum_{j=0}^{d-1}\ket{j}\ket{a_j},\\
    & \ket{\phi}^{AB}=\sum_{j=0}^{d-1}\ket{j}\ket{b_j},
\end{align}
\end{subequations}
where $\ket{a_j}$ and $\ket{b_j}$ are (unnormalized) real states and $d=d_A$ is Alice's dimension. Without loss of generality we assume that $d_A \leq d_B$, where $d_B$ is Bob's dimension.

We now consider the matrix $C$ with elements
\begin{equation}
    C_{jk}=\langle a_j | b_k \rangle.
\end{equation}
Since $\ket{\psi}^{AB}$ and $\ket{\phi}^{AB}$ are orthogonal, we have 
\begin{equation}
    \tr C=0.
\end{equation}
If we apply a real orthogonal matrix $O$ on Alice's side, the two states are transformed as
\begin{subequations}
\begin{align}
    & (O\otimes\id)\ket{\psi}^{AB}=\sum_{k=0}^{d-1}\ket{k}\sum_{j=0}^{d-1}O_{kj}\ket{a_j},\\
    & (O\otimes\id)\ket{\phi}^{AB}=\sum_{k=0}^{d-1}\ket{k}\sum_{j=0}^{d-1}O_{kj}\ket{b_j}.
\end{align}
\end{subequations}
If now Alice applies a local von Neumann measurement in the computational basis, Bob is left with a (possibly unnormalized) state of the form
\begin{equation}
    \ket{\tilde{a}_k} = \sum_j O_{kj}\ket{a_j} \,\,\,\,\mathrm{or}\,\,\,\,\ket{\tilde{b}_k} = \sum_j O_{kj}\ket{b_j}.
\end{equation}
This allows us to define a matrix $\tilde{C}$ as follows:
\begin{equation}
    \tilde{C}_{mn}=\braket{\tilde{a}_m}{\tilde{b}_n}=\sum_{kl}O_{mk}\langle a_k | b_l \rangle O_{nl}= \sum_{kl}O_{mk} C_{kl} (O^T)_{ln},
\end{equation}
so we have $\tilde{C}=O C O^T$. 

In the next step we will show that there exists a real orthogonal matrix $O$ such that all diagonal elements of $\tilde{C}$ become zero. This will complete the proof: if Alice applies $O$ locally and performs a von Neumann measurement in the computational basis, Bob will find his system either in the state $\ket{\tilde{a}_j}$ or $\ket{\tilde{b}_j}$. Bob can distinguish these states perfectly, since 
\begin{equation}
    \tilde{C}_{jj} = \braket{\tilde{a}_j}{\tilde{b}_j}=0.
\end{equation}

Note that for any $2\times 2$ real matrix $C$, there always exists a real orthogonal $2\times2$ matrix $O$ such that the diagonal elements of $OCO^T$ are equal to each other. Assume now that the dimension of Alice is a power of $2$, i.e., $d_A=2^k$. This implies that $C$ is a $2^k \times 2^k$ real matrix. Our goal is to make all the diagonal elements of $C$ zero by applying two-level real orthogonal rotations. Recalling that the trace of $C$ is zero, our goal can be achieved by making all the diagonal elements equal.

To this end, we first group all diagonal elements of $C$ into $2^{k-1}$ pairs and apply $2^{k-1}$ real orthogonal transformations, each acting on two levels. In this way we can obtain a new matrix $C'$ with the property $\tr C' = \tr C$ and the diagonal elements of $C'$ are pairwise equal. Consider now two pairs of diagonal elements, e.g.
\begin{equation}
C'_{00}=C'_{11},\,\,\,\,\,\, C'_{22}=C'_{33}.
\end{equation}
We can now apply two real orthogonal transformations, one acting on levels $0$ and $2$, and the other acting on levels $1$ and $3$. In this way, with a suitable choice of real orthogonal transformations, we can obtain a new matrix $C''$ with the properties $\tr C'' = \tr C$ and
\begin{equation}
    C''_{00}=C''_{11}=C''_{22}=C''_{33}.
\end{equation}
Proceeding in this way, we can make all diagonal elements equal to zero. This completes the proof for the case that the dimension of Alice's system is a power of $2$. 

If the dimension of Alice's system is not a power of $2$, we can extend the dimension of Alice to be of the form $2^k$, thus extending the correlation matrix $C$ with additional rows and columns having zero in all entries. All parts of the proof remain the same, which proves the statement for any dimension of Alice.
\end{proof}

As we discuss in~\cite{PRLversion}, the situation is very different when mixed states are considered: there exist pairs of mixed real states $\rho^{AB}_1$ and $\rho^{AB}_2$ which can be distinguished perfectly with LOCC, but which cannot be distinguished via LRCC with any nonzero probability. We refer to~\cite{PRLversion} for more details, where we also report results on experimental state discrimination with linear optics.

\section{Conclusion \label{sec:Conclusion}}
In this work we investigated the role of complex numbers in quantum mechanics, using the framework of quantum resource theories. We discussed imaginarity quantification, focusing on the geometric imaginarity and the robustness of imaginarity, proving that both measures have an operational meaning for state conversion. We presented a full solution for stochastic state conversion via real operation for pure states, and for deterministic state conversion for all states of a single qubit. We also found optimal fidelity for approximate imaginarity distillation in the single-copy regime. 

Our results show that imaginarity can be regarded as a resource in optical experiments: under certain assumptions commonly used in experiments, realizing a real operation reduces the number of optical elements by $1/2$, compared to the number of elements for a general quantum operation. We also discuss the role of imaginarity for local discrimination of quantum states. The methods presented here are also used in the companion article~\cite{PRLversion}, where several of the results have been initially announced. Our work can stimulate new research on quantum resource theories, and is important for a deeper understanding of the role of complex numbers in quantum mechanics.

\section*{Acknowledgements}
We thank Bartosz Regula and Micha\l{} Oszmaniec for discussion. The work at the University of Science and Technology of China is supported by the National Key Research and Development Program of China (No. 2018YFA0306400), the National Natural Science Foundation of China (Grants No. 61905234, 11974335,11574291, and 11774334), the Key Research Program of
Frontier Sciences, CAS (Grant No. QYZDYSSW-SLH003), and the Fundamental Research Funds for the Central Universities (Grant No. WK2470000026). T.V.K., S.R., and A.S.\ acknowledge financial support by the ``Quantum Optical Technologies'' project, carried out within the International Research Agendas programme of the Foundation for Polish Science co-financed by the European Union under the European Regional Development Fund. C.M.S.\ acknowledges the hospitality of the Centre for Quantum Optical Technologies at the University of Warsaw, and financial support by the Pacific Institute for the Mathematical Sciences (PIMS) and a Faculty of Science Grand Challenge award at the University of Calgary.

\appendix

\bibliography{ImaginarityBib}

\end{document}